\documentclass[11pt, a4paper]{article}

\usepackage{setspace}
\usepackage{bm}
\usepackage{upgreek}
\usepackage{cite}
\usepackage{rotating}
\usepackage{graphicx}
\usepackage{verbatim}
\usepackage{wrapfig}
\usepackage[margin=2.5cm]{geometry}
\usepackage{amsmath}
\usepackage{placeins}
\usepackage{url}

\usepackage{fancyhdr}
\usepackage{color}
\usepackage{amsthm}
\usepackage{amsmath, amssymb}
\usepackage{mathabx}
\usepackage{longtable}
\usepackage{lscape}
\usepackage{multicol}
\usepackage{pdfpages}
\usepackage{siunitx}
\usepackage{graphicx}
\usepackage{enumerate}

\usepackage[affil-it]{authblk}

\usepackage{caption}
\usepackage{subcaption}
\usepackage{bbm}
\usepackage{nameref}

\usepackage{tikz}
\usetikzlibrary{positioning}
\tikzset{>=stealth}
\usetikzlibrary{decorations.markings}
\usetikzlibrary{patterns,angles,quotes}
\usetikzlibrary{math}
\usetikzlibrary{calc}
\usetikzlibrary{shapes,backgrounds}
\usetikzlibrary{arrows.meta,intersections}

\usepackage{hyperref}

\usepackage{chngcntr}
\counterwithout{table}{section}

\newtheorem{all}{Theorem}[section]

\theoremstyle{plain}
\newtheorem{lemma}[all]{Lemma}
\newtheorem{thm}[all]{Theorem}

\theoremstyle{definition}

\newtheorem{rem}[all]{Remark}
\newtheorem{asptn}[all]{Assumption}

\newcommand{\BC}{{\mathbb{C}}}

\newcommand{\BI}{{\mathbb{I}}}

\newcommand{\BR}{{\mathbb{R}}}
\newcommand{\BS}{{\mathbb{S}}}

\newcommand{\FT}{{\mathcal{F}}}

\newcommand{\dd}{{\mathrm{d}}}

\newcommand{\sgn}{\mathrm{sgn}}

\newcommand{\tr}{\mathrm{Tr}\ }

\newcommand{\com}[1]{}

\makeatletter
\pgfdeclarepatternformonly[\LineSpace]{my north east lines}{\pgfqpoint{-1pt}{-1pt}}{\pgfqpoint{\LineSpace}{\LineSpace}}{\pgfqpoint{\LineSpace}{\LineSpace}}%
{
    \pgfsetcolor{\tikz@pattern@color}
    \pgfsetlinewidth{0.4pt}
    \pgfpathmoveto{\pgfqpoint{0pt}{0pt}}
    \pgfpathlineto{\pgfqpoint{\LineSpace + 0.1pt}{\LineSpace + 0.1pt}}
    \pgfusepath{stroke}
}

\pgfdeclarepatternformonly[\LineSpace]{my north west lines}{\pgfqpoint{-1pt}{-1pt}}{\pgfqpoint{\LineSpace}{\LineSpace}}{\pgfqpoint{\LineSpace}{\LineSpace}}%
{
    \pgfsetcolor{\tikz@pattern@color}
    \pgfsetlinewidth{0.4pt}
    \pgfpathmoveto{\pgfqpoint{0pt}{\LineSpace}}
    \pgfpathlineto{\pgfqpoint{\LineSpace + 0.1pt}{-0.1pt}}
    \pgfusepath{stroke}
}
\makeatother

\newdimen\LineSpace
\tikzset{
    line space/.code={\LineSpace=#1},
    line space=17pt
}

\DeclareMathOperator\artanh{artanh}

\numberwithin{equation}{section}
\setcounter{secnumdepth}{4}

\begin{document}
\title{Linear Criterion for an Upper Bound on the Bardeen-Cooper-Schrieffer Critical Temperature}
\author[1]{Barbara Roos\footnote{Author to whom correspondence should be addressed: barbara.roos@uni-tuebingen.de}}
\affil[1]{University of T\"ubingen, Fachbereich Mathematik, Auf der Morgenstelle 10, 72076 T\"ubingen, Germany}

\date{\today}

\maketitle

\begin{abstract}
Since Bardeen-Cooper-Schrieffer theory of superconductivity is non-linear, it is difficult to study superconducting properties analytically.
There is a more tractable linear criterion which determines a temperature $T_l$ below which the system is superconducting.
Here, we observe that there is a similar linear criterion which gives a temperature $T_u$ above which no superconductivity occurs.
We provide examples of translation invariant systems where $T_u>T_l$ as well as systems where $T_u=T_l$.
Furthermore, we estimate $T_u$ for half-spaces and show that it is exponentially small in the weak coupling limit, exhibiting the same asymptotics as the critical temperature for full space.
\end{abstract}

\section{Introduction}
Bardeen-Cooper-Schrieffer (BCS) theory is a successful model for superconductivity.
Recent developments include the rigorous derivation of the Ginzburg-Landau equation from BCS theory \cite{frank_microscopic_2012,frank_external_2016,deuchert_microscopic_2023, deuchert_microscopic_2023-1}, universality of the ratio of the energy gap and the critical temperature \cite{langmann_universal_2023-1,henheik_universal_2023, henheik_universality_2023}, and the existence of boundary superconductivity \cite{hainzl_boundary_2022, roos_bcs_2023, roos_enhanced_2023}.
While BCS theory is difficult to study in general due to its non-linearity, there is a linear criterion to derive a temperature $T_l$ below which the system is superconducting \cite{hainzl_bcs_2008-1,cadamuro_bcs_2018}.
This linear criterion was for instance applied to study boundary superconductivity in \cite{hainzl_boundary_2022, roos_bcs_2023, roos_enhanced_2023}.
It is expected that superconductivity can occur above $T_l$ in some systems \cite{caldas_temperature_2008,freiji_gap_2012, rodriguez_salmon_first-order_2019}. 
In this paper we observe that there is a linear criterion, which gives a temperature $T_u\geq T_l$ above which no superconductivity occurs.
If the two temperatures agree, $T_c:=T_u=T_l$ is the unique critical temperature of the system which separates the superconducting and normal phase.

We compare $T_u$ and $T_l$ for translation-invariant systems, where we restrict to $SU(2)$-invariant states, but allow for nonzero total momentum.
It has been known that if the minimizer of the BCS functional has zero total momentum, i.e.~is translation invariant, there is a unique critical temperature $T_c$ \cite{hainzl_bardeencooperschrieffer_2016}.
However, it is unclear whether the minimizer indeed is translation invariant.
There is an argument in \cite{hainzl_bardeencooperschrieffer_2016} which shows translation invariance of the minimizer if the BCS functional is minimized over a larger set of states, where the Cooper-pair wave function is not required to be symmetric.
It was observed in \cite{deuchert_persistence_2018} that the argument can be adapted to the setting with symmetry for suitable interactions.
We discuss the precise conditions under which the argument shows that $T_u$ and $T_l$ agree, and show that there are translation invariant systems where $T_u>T_l$.
In particular, we point out that it is an open question whether there is a unique critical temperature in all translation-invariant BCS functionals restricted to $SU(2)$-invariant states.

Furthermore, we study $T_u$ on half-spaces and derive the first rigorous upper bounds on the BCS critical temperatures for systems with boundary.
We extend results from \cite{hainzl_boundary_2022, roos_bcs_2023} on the relative difference of the critical temperatures on half- and full space in the weak and strong coupling limits.
It turns out that even though superconductivity occurs at elevated temperatures in the presence of a boundary \cite{hainzl_boundary_2022, roos_bcs_2023}, the effect vanishes in the weak-coupling limit.
In particular, on half-spaces superconductivity can only occur at temperatures exponentially small in the coupling, like on full space.

In the following, we explain the two linear criteria in more detail.
In Section~\ref{sec:results} we present and discuss our main results.

\subsection{The two linear criteria}\label{sec:criteria}
BCS theory is based on minimizing the BCS functional.
If the minimizer has a non-trivial pairing term $\alpha$, the system is in a superconducting state.
Given a system with shape $\Omega$ and assuming $SU(2)$-invariance, the functional is defined on Hermitian states $\Gamma$ of the following form
\begin{equation}
\Gamma=\begin{pmatrix} \gamma & \alpha\\ \overline{\alpha} & 1- \overline{\gamma} \end{pmatrix},
\end{equation}
where $\gamma$ and $\alpha$ are operators acting on $L^2(\Omega)$ and $0\leq \Gamma \leq 1$.
The operator $\gamma$ is self-adjoint, while $\alpha$ is symmetric, i.e. the kernel satisfies $\alpha(x,y)=\alpha(y,x)$.
The BCS functional is given by
\begin{equation}
\mathcal{F}(\Gamma)=\tr(h\gamma) -TS(\Gamma)- \int_{\Omega\times \Omega} V(x-y) \vert \alpha(x,y)\vert^2 \dd  x\dd y ,
\end{equation}
where $h$ is a one-particle Hamiltonian,  $T$ is the temperature, $S(\Gamma)=-\tr(\Gamma \ln \Gamma) $ is the entropy,  $V$ is the effective interaction between the electrons, and $\alpha(x,y)$ is the integral kernel of $\alpha$ \cite{hainzl_bardeencooperschrieffer_2016}.
Compared to \cite{hainzl_bardeencooperschrieffer_2016} the chemical potential has been absorbed into the one-particle Hamiltonian and does not appear explicitly in the functional.

The corresponding Euler-Lagrange equation is given by \cite{frank_bcs_2019}
\begin{equation}\label{el-eq}
H_\Delta+T \ln \left(\frac{\Gamma}{1-\Gamma}\right)=0
\end{equation}
where
\begin{equation}
H_\Delta=\begin{pmatrix} h & \Delta \\ \overline{\Delta} & -\overline{h} \end{pmatrix}
\end{equation}
and $\Delta(x,y)=-2V(x-y)\alpha(x,y)$.
There is always one solution of the Euler-Lagrange with $\alpha=0$, called the normal state.
The corresponding $\gamma$ is given by $\gamma=(1+\exp(h/T))^{-1}$.
The question is, whether the normal state minimizes the BCS functional.
If the Hessian of the functional at the normal state has negative spectrum, the normal state is not the minimizer and therefore there is superconductivity.
To express the Hessian, define the operator
\begin{equation}\label{eq:KT}
L_{T}=\frac{h_x+h_y}{\tanh(h_x/2T)+\tanh(h_y/2T)}
\end{equation}
on $L^2(\Omega\times \Omega)$ through functional calculus \cite{frank_bcs_2019}. 
Let $V$ act on $L^2(\Omega\times \Omega)$ as multiplication with $V(x-y)$.
If one computes the Hessian for variations of $\alpha$, one finds that it equals $L_{T} -V$ restricted to symmetric functions $\alpha(x,y)=\alpha(y,x)$.
If this operator has spectrum below zero, there is superconductivity.
One can define
\begin{equation}
T_l:=\inf\{ T>0: \inf \sigma_s(L_T-V) \geq 0\},
\end{equation}
where $\sigma$ denotes the spectrum and the subscript $s$ denotes that the domain is restricted to symmetric functions.
The system is in a superconducting state for $T<T_l$.

We now observe that a similar linear criterion gives a temperature $T_u$ such that the system is in the normal state for $T>T_u$.
The idea of having a linear criterion for an upper bound on the critical temperature has previously occurred in \cite{freiji_gap_2012} for a translation invariant, spin polarized system.
We need the following two Lemmas, which are taken from \cite[Lemma 4.1, Lemma 4.2]{hainzl_bardeencooperschrieffer_2016}.
These results go back to \cite{frank_microscopic_2012} where they were essential for the derivation of the Ginzburg-Landau functional, and a similar result was derived earlier in \cite{hainzl_nonlinear_2008}.
\begin{lemma}\label{lemma1}
Let $\Gamma_0=(1+e^{H_0/T})^{-1}$ be the normal state.
With the relative entropy
\begin{equation}
D(\Gamma||\Gamma'):=\frac{1}{2}\tr [ \Gamma(\ln \Gamma-\ln \Gamma')+(1-\Gamma)(\ln(1-\Gamma)-\ln(1-\Gamma'))]
\end{equation}
one can write
\begin{equation}
\mathcal{F}(\Gamma)-\mathcal{F}(\Gamma_0)=T D(\Gamma||\Gamma_0)-\int_{\Omega\times \Omega} V(x-y)\vert \alpha(x,y)\vert^2 \dd x \dd y
\end{equation}
\end{lemma}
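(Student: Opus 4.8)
The plan is to reduce the identity to elementary trace manipulations using only the block form of $\Gamma$ and the explicit expression $\Gamma_0=(1+e^{H_0/T})^{-1}$, where $H_0$ is $H_\Delta$ with $\Delta=0$. First note that the interaction term $-\int_{\Omega\times\Omega}V(x-y)|\alpha(x,y)|^2\,\dd x\,\dd y$ depends on $\Gamma$ only through $\alpha$, and that $\alpha_0=0$ for the normal state, so it appears identically on both sides of the asserted equation; it therefore suffices to prove
\[
\tr(h\gamma)-TS(\Gamma)-\tr(h\gamma_0)+TS(\Gamma_0)=T\,D(\Gamma\|\Gamma_0),\qquad \gamma_0=(1+e^{h/T})^{-1}.
\]

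Next I would re-express the kinetic term through $H_0$. Taking the $2\times2$ block trace gives $\tr(H_0\Gamma)=\tr(h\gamma)-\tr\!\big(\overline h(1-\overline\gamma)\big)=2\tr(h\gamma)-\tr\overline h$, using cyclicity and that $\tr(h\gamma)$ is real (indeed $\overline{\tr(h\gamma)}=\tr((h\gamma)^*)=\tr(\gamma h)=\tr(h\gamma)$ for self-adjoint $h,\gamma$). The constant $\tr\overline h$ is independent of $\Gamma$ and cancels, so $\tr(h\gamma)-\tr(h\gamma_0)=\tfrac12\tr\!\big(H_0(\Gamma-\Gamma_0)\big)$. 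For the entropy I would use the particle–hole symmetry of admissible states: $1-\Gamma$ is antiunitarily equivalent to $\Gamma$ (the off-diagonal block swap composed with complex conjugation and a sign flip), so $\tr[(1-\Gamma)\ln(1-\Gamma)]=\tr(\Gamma\ln\Gamma)$ and hence $S(\Gamma)=-\tr(\Gamma\ln\Gamma)=-\tfrac12\tr[\Gamma\ln\Gamma+(1-\Gamma)\ln(1-\Gamma)]$, which is exactly the combination occurring in $D$.

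Then I would compute $T\,D(\Gamma\|\Gamma_0)$ directly. From $\Gamma_0=(1+e^{H_0/T})^{-1}$ and functional calculus, $\ln\Gamma_0=-\ln(1+e^{H_0/T})$ and $\ln(1-\Gamma_0)=H_0/T-\ln(1+e^{H_0/T})$, so that $\Gamma\ln\Gamma_0+(1-\Gamma)\ln(1-\Gamma_0)=-\ln(1+e^{H_0/T})+(1-\Gamma)H_0/T$ by distributivity alone (no commutation with $\Gamma$ is needed). Taking the trace and using cyclicity,
\[
T\,D(\Gamma\|\Gamma_0)=-T\,S(\Gamma)+\tfrac12\tr(H_0\Gamma)+C,\qquad C:=\tfrac{T}{2}\tr\ln(1+e^{H_0/T})-\tfrac12\tr H_0,
\]
where $C$ does not depend on $\Gamma$. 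Evaluating this identity at $\Gamma=\Gamma_0$, where the left-hand side vanishes, pins down $C=T\,S(\Gamma_0)-\tfrac12\tr(H_0\Gamma_0)$; substituting $C$ back and invoking $\tr(h\gamma)-\tr(h\gamma_0)=\tfrac12\tr(H_0(\Gamma-\Gamma_0))$ from the previous step yields $T\,D(\Gamma\|\Gamma_0)=\tr(h\gamma)-\tr(h\gamma_0)-T(S(\Gamma)-S(\Gamma_0))$, which is the claim.

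The main obstacle is that for the physically relevant unbounded systems none of the individual traces $\tr H_0$, $\tr\ln(1+e^{H_0/T})$, $\tr(h\gamma)$, $\tr(\Gamma\ln\Gamma)$ is finite, so the bookkeeping above must be reorganized so that only the genuinely convergent relative quantities $\tr[H_0(\Gamma-\Gamma_0)]$, $S(\Gamma)-S(\Gamma_0)$ and $D(\Gamma\|\Gamma_0)$ ever appear. This is precisely where the admissibility hypotheses on $\Gamma$ enter (one restricts to states with $\Gamma-\Gamma_0$ trace class, equivalently $D(\Gamma\|\Gamma_0)<\infty$), and the cancellations of the divergent constants $\tr H_0$ and $\tr\ln(1+e^{H_0/T})$ have to be carried out before passing to the trace. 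Once this relative formulation is in place, the proof is the elementary chain above together with justifying the particle–hole symmetry and the logarithmic functional calculus in that relative sense.
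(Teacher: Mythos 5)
Your identity check is formally correct, and it is worth noting that the paper itself gives no proof of this lemma at all: it is quoted verbatim from Hainzl--Seiringer (J.~Math.~Phys.~57, 021101 (2016), Lemma 4.1), so there is no in-paper argument to compare against. Your computation is essentially the standard one from that reference, run in a slightly different order: the block identity $\tr(H_0\Gamma)=2\tr(h\gamma)-\tr\overline h$, the particle--hole symmetry $1-\Gamma=W\overline\Gamma W^\dagger$ (your sign-corrected swap) giving $S(\Gamma)=-\tfrac12\tr[\Gamma\ln\Gamma+(1-\Gamma)\ln(1-\Gamma)]$, and the functional-calculus relations $\ln\Gamma_0-\ln(1-\Gamma_0)=-H_0/T$, $\ln(1-\Gamma_0)=H_0/T-\ln(1+e^{H_0/T})$ are exactly the ingredients used there; your device of pinning the $\Gamma$-independent constant by evaluating at $\Gamma=\Gamma_0$ is a tidy way to avoid writing out $TS(\Gamma_0)-\tfrac12\tr(H_0\Gamma_0)$ explicitly and is equivalent to the direct computation. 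You also correctly identify the only genuine subtlety: for $\Omega=\BR^d$ or a half-space none of the individual traces ($\tr H_0$, $\tr\ln(1+e^{H_0/T})$, $\tr(h\gamma)$, $S(\Gamma)$) is finite, so the manipulations must be organized so that only relative quantities such as $\tr[H_0(\Gamma-\Gamma_0)]$, $S(\Gamma)-S(\Gamma_0)$ and $D(\Gamma\|\Gamma_0)$ appear; this is handled in the cited reference by working in the appropriate trace-ideal setting for admissible states (and, for infinite volume, per unit volume or in finite boxes), and your sketch of how to reorganize the cancellations is the right idea, though as written it is a sketch rather than a complete analytic justification. Within the scope of what this paper asserts (a quoted lemma used as input), your proposal is a sound reconstruction of the standard proof.
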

\begin{lemma}\label{lemma2}
With $\Gamma_0=(1+e^{H_0/T})^{-1}$ we have
\begin{equation}
T D(\Gamma||\Gamma_0)\geq \frac{1}{2} \tr \left[ \frac{H_0}{\tanh \frac{H_0}{2T} }(\Gamma-\Gamma_0)^2\right]+\frac{2}{3}T\tr[(\Gamma(1-\Gamma) - \Gamma_0(1-\Gamma_0))^2]
\end{equation}
\end{lemma}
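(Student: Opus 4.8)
The plan is to reduce the operator inequality to a single scalar inequality in one variable and then prove that scalar inequality by an elementary term‑by‑term estimate.

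\emph{Step 1 (rewrite everything through $\Gamma_0$).} Set $\psi(x)=x\ln x+(1-x)\ln(1-x)$ on $[0,1]$, so $\psi'(x)=\ln\tfrac{x}{1-x}$ and $\psi''(x)=\tfrac1{x(1-x)}$. From $\Gamma_0=(1+e^{H_0/T})^{-1}$ one reads off $\psi'(\Gamma_0)=-H_0/T$ and $\tanh(H_0/2T)=1-2\Gamma_0$, hence $\tfrac{H_0}{\tanh(H_0/2T)}=T\,m(\Gamma_0)$ with
\[ m(x):=\frac{-\psi'(x)}{1-2x}=\frac{2\artanh(1-2x)}{1-2x}. \]
A direct computation, using $\psi'(\Gamma_0)=\ln\Gamma_0-\ln(1-\Gamma_0)$, gives $2D(\Gamma\|\Gamma_0)=\tr[\psi(\Gamma)-\psi(\Gamma_0)-\psi'(\Gamma_0)(\Gamma-\Gamma_0)]$. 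Dividing the claim by $T/2$, it becomes the $H_0$‑ and $T$‑free statement
\[ \tr[\psi(\Gamma)-\psi(\Gamma_0)-\psi'(\Gamma_0)(\Gamma-\Gamma_0)]\ \ge\ \tr[m(\Gamma_0)(\Gamma-\Gamma_0)^2]+\tfrac43\,\tr[(g(\Gamma)-g(\Gamma_0))^2], \]
where $g(x):=x(1-x)$, so $\Gamma(1-\Gamma)=g(\Gamma)$.

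\emph{Step 2 (diagonalize).} Writing $\Gamma=\sum_i a_i|u_i\rangle\langle u_i|$ and $\Gamma_0=\sum_j b_j|v_j\rangle\langle v_j|$ and inserting resolutions of the identity gives the Klein‑type identity $\tr[\psi(\Gamma)-\psi(\Gamma_0)-\psi'(\Gamma_0)(\Gamma-\Gamma_0)]=\sum_{i,j}w_{ij}[\psi(a_i)-\psi(b_j)-\psi'(b_j)(a_i-b_j)]$ with $w_{ij}:=|\langle u_i|v_j\rangle|^2\ge0$. The key point is that the two right‑hand terms diagonalize the same way: since $\langle v_j|(\Gamma-\Gamma_0)|u_i\rangle=(a_i-b_j)\langle v_j|u_i\rangle$ and $\langle u_i|(g(\Gamma)-g(\Gamma_0))|v_j\rangle=(g(a_i)-g(b_j))\langle u_i|v_j\rangle$, one gets $\tr[m(\Gamma_0)(\Gamma-\Gamma_0)^2]=\sum_{i,j}w_{ij}\,m(b_j)(a_i-b_j)^2$ and $\tr[(g(\Gamma)-g(\Gamma_0))^2]=\sum_{i,j}w_{ij}(g(a_i)-g(b_j))^2$. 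As $w_{ij}\ge0$, the whole statement follows term by term from the scalar inequality
\[ (\star)\qquad \psi(a)-\psi(b)-\psi'(b)(a-b)\ \ge\ m(b)(a-b)^2+\tfrac43\,(g(a)-g(b))^2,\qquad a,b\in[0,1], \]
with the obvious modifications (spectral measures in place of eigenbasis sums) for operators with continuous spectrum.

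\emph{Step 3 (prove $(\star)$ — the heart of the argument).} I would substitute $a=\tfrac{1+p}2$, $b=\tfrac{1+q}2$ with $p,q\in[-1,1]$. Then $\psi(\tfrac{1+p}2)=-\ln2+\Psi(p)$ with $\Psi(p)=p\artanh(p)+\tfrac12\ln(1-p^2)$, so $\Psi'(p)=\artanh(p)$ and $\Psi''(p)=\tfrac1{1-p^2}=\sum_{k\ge0}p^{2k}$; moreover $\psi'(b)=2\artanh(q)$, $m(b)=\tfrac{2\artanh(q)}{q}$, $a-b=\tfrac{p-q}2$, $g(a)-g(b)=\tfrac{q^2-p^2}4$, and $(\star)$ turns into
\[ \Psi(p)-\Psi(q)-\artanh(q)\,(p-q)\ \ge\ \frac{\artanh(q)}{2q}(p-q)^2+\frac1{12}(p^2-q^2)^2. \]
Writing the left side as the Taylor remainder $\int_q^p(p-t)\,\Psi''(t)\,\dd t=\sum_{k\ge0}\int_q^p(p-t)\,t^{2k}\,\dd t$ and $\tfrac{\artanh(q)}{2q}=\tfrac12\sum_{k\ge0}\tfrac{q^{2k}}{2k+1}$, a short polynomial computation shows that the $k=0$ contributions cancel and that the $k=1$ contribution on the left equals the $k=1$ contribution on the right together with $\tfrac1{12}(p^2-q^2)^2$; and for each $k\ge2$,
\[ \int_q^p(p-t)\,t^{2k}\,\dd t\ \ge\ \frac{q^{2k}}{2(2k+1)}(p-q)^2 \qquad\Longleftrightarrow\qquad p^{2k+2}+k\,q^{2k+2}\ \ge\ (k+1)\,p^2q^{2k}, \]
the right inequality being the arithmetic–geometric mean inequality applied to $p^{2k+2}$ together with $k$ copies of $q^{2k+2}$. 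Summing over $k\ge2$ proves $(\star)$.

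\emph{Where the difficulty lies.} Steps 1 and 2 are bookkeeping; the substantive step is $(\star)$, and in particular finding the $\artanh$ variables in which the Bregman remainder and the two correction terms become power series whose low‑order coefficients match. A useful consistency check is that $(\star)$ is an equality along the entire line $a+b=1$ (in particular at $a=b=\tfrac12$), which is exactly why the constants $m(b)$ and $\tfrac43$ are forced and why nothing can be wasted at the $k=0,1$ orders. The only analytic subtlety is that $D(\Gamma\|\Gamma_0)$ and $\tr[\tfrac{H_0}{\tanh(H_0/2T)}(\Gamma-\Gamma_0)^2]$ may both be $+\infty$, which the term‑by‑term argument handles automatically.
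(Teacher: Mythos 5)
Your proof is correct; I checked the reduction identities in Steps 1--2, the $k=0$ cancellation, the exact $k=1$ identity $\int_q^p(p-t)t^2\,\dd t=\tfrac{q^2}{6}(p-q)^2+\tfrac1{12}(p^2-q^2)^2$, and the equivalence $2(2k+1)(2k+2)\bigl[\int_q^p(p-t)t^{2k}\dd t-\tfrac{q^{2k}}{2(2k+1)}(p-q)^2\bigr]=2\bigl[p^{2k+2}+kq^{2k+2}-(k+1)p^2q^{2k}\bigr]$, which AM--GM settles for every $k$, so the term-by-term summation (all terms nonnegative) is legitimate. For comparison: the present paper does not prove Lemma~\ref{lemma2} at all but quotes it from \cite{hainzl_bardeencooperschrieffer_2016} (Lemma 4.2, going back to \cite{frank_microscopic_2012}); there the overall structure is the same as yours, namely a Klein-inequality reduction to exactly your scalar inequality $(\star)$ for eigenvalue pairs $x\in[0,1]$, $y\in(0,1)$, so your Steps 1--2 reproduce the standard route. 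What is genuinely your own is the proof of $(\star)$: the change of variables $a=\tfrac{1+p}2$, $b=\tfrac{1+q}2$, the expansion of the Bregman remainder of $\Psi$ and of $\artanh(q)/q$ into even power series, and the order-by-order comparison via AM--GM. This is more elementary and arguably more transparent than the calculus argument in the cited literature, and it explains why the constants $H_0/\tanh(H_0/2T)$ and $2/3$ are sharp (equality on the line $a+b=1$, i.e.\ $p=-q$, where every AM--GM step saturates). The one place you are too brisk is the phrase ``obvious modifications'' for operators with continuous spectrum and possibly infinite traces: since all terms in your double sums are nonnegative this is indeed standard (spectral measures, or evaluating all traces in a fixed basis and using Tonelli), but a written version should spell out that both sides are then well defined in $[0,\infty]$ and that the inequality is inherited in the limit.
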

Neglecting nonnegative terms, one can bound
\[
T D(\Gamma||\Gamma_0)\geq \frac{1}{2}\tr \left[ \frac{H_0}{\tanh \frac{H_0}{2T}}\begin{pmatrix}\alpha \overline{\alpha}&0\\0& \overline{\alpha}\alpha\end{pmatrix}\right].
\]
The expression on the right equals $\langle \alpha, D_T \alpha\rangle$, where
\begin{equation}\label{eq:DT}
D_{T}=\frac{1}{2} \frac{h_x}{\tanh(h_x/2T)}+\frac{1}{2} \frac{h_y}{\tanh(h_y/2T)}
\end{equation}
is defined in $L^2(\Omega\times \Omega)$. 
Combining this with Lemma~\ref{lemma1}, gives
\begin{equation}
\mathcal{F}(\Gamma)-\mathcal{F}(\Gamma_0)\geq \langle \alpha, (D_T-V)\alpha \rangle.
\end{equation}
In particular, if $\inf \sigma_s(D_T-V)>0$, the system is in the normal state.
Define $T_u$ as
\begin{equation}\label{eq:Tu}
T_u:=\sup\{ T> 0: \inf \sigma_s(D_T-V) \leq 0\}.
\end{equation}
The system is in the normal state for temperatures above $T_u$.

\subsection{Results}\label{sec:results}
We now fix the one-particle Hamiltonian to be $h=-\Delta-\mu$, where $\Delta$ is the Laplace operator with appropriate boundary conditions and $\mu\in \BR$ is the chemical potential and consider dimensions $d\in\{1,2,3\}$.
In the translation invariant case, we assume $\Omega$ to be $\BR^d$.
The other domains $\Omega$ we shall consider are half-spaces $(0,\infty)\times \BR^{d-1}$ with Dirichlet or Neumann boundary conditions.
For the interaction $V$ we shall assume the following.
\begin{asptn}\label{as1}
Let $d\in\{1,2,3\}$. Assume that
\begin{enumerate}[(i)]
\item $V\in L^{p_d}(\BR^d)$, where $p_1=1$, $p_2>1$, and $p_3\geq 3/2$. For $d=1$, $V$ may also be the difference of two bounded, positive Borel measures.
\item $V(r)=V(-r)$
\end{enumerate}
 \end{asptn}
With these assumptions on $V$, $D_T-V$ and $L_T-V$ define self-adjoint operators via the KLMN theorem, see e.g.~\cite[Section 11.3]{lieb_analysis_2001} and \cite[Theorem 6.24]{teschl_mathematical_2014}.
Both $\inf \sigma_s(L_T-V)$ and $\inf \sigma_s(D_T-V)$ are continuous in $T$, the proof is analogous to  \cite[Lemma 4.1]{roos_bcs_2023}.
Furthermore, since $L_T\geq D_T\geq 2T$ \cite[Lemma 2.7]{roos_bcs_2023}, the temperatures $T_l$ and $T_u$ satisfy $T_l\leq T_u<\infty$.
Since $D_T$ is strictly increasing in $T$, if $T_u$ is positive, it is the unique temperature satisfying $\inf \sigma_s(D_T-V)=0$.

\subsubsection{Translation invariant systems}
It has been shown in \cite{hainzl_bcs_2008-1}, that if one restricts the BCS functional to translation invariant states $\alpha(x,y)=\alpha(x-y)$, there is a unique critical temperature separating the normal from the superconducting phase.
Define the operator $K_T-V$ acting in $L^2(\BR^d)$, where
\begin{equation}
K_T = \frac{h}{\tanh \frac{h}{2T}}
\end{equation}
and with slight abuse of notation $V$ here denotes the one-body operator acting as multiplication with $V(r)$.
We shall see that the restriction of both operators $L_T-V$ and $D_T-V$ to such states with zero total momentum give $K_T-V$.
Thus, the lower and upper bounds on the critical temperature match in this case.
Now, one may ask whether the minimizer of the BCS functional is translation invariant.
This question has been addressed in \cite[Section F]{hainzl_bardeencooperschrieffer_2016}.
The arguments presented there show that the BCS functional has a translation invariant minimizer on an enlarged domain, where the symmetry condition $\alpha(x,y)=\alpha(y,x)$ is dropped.
We follow the arguments in \cite[Section F]{hainzl_bardeencooperschrieffer_2016} to derive sufficient conditions for $T_l$ and $T_u$ to agree when the symmetry condition $\alpha(x,y)=\alpha(y,x)$ is imposed.

Note that since $V(r)=V(-r)$, the operator $K_T-V$ commutes with reflections $\psi(r)\mapsto \psi(-r)$.
Let $\sigma_{s/a} (K_T-V)$ denote the spectrum of $K_T-V$ restricted to even/odd functions.
In momentum space, $K_T$ acts as multiplication by
\begin{equation}\label{KT}
K_T(p)=\frac{p^2-\mu}{\tanh \frac{p^2-\mu}{2T}}
\end{equation}
which attains the minimal value $2T$ on the Fermi sphere $p^2=\mu$ for $\mu>0$.

\begin{lemma}\label{lea:uniquetc}
Let $d\in\{1,2,3\}$ and $\mu\in \BR$.
Let $\Omega=\BR^d$ and let $V$ satisfy Assumption~\ref{as1}.
If there is a temperature $T_c$ such that $\inf \sigma_a (K_{T_c}-V)\geq\inf \sigma_s(K_{T_c}-V) =0$, then $T_c=T_u=T_l$.
\end{lemma}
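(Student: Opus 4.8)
\emph{Plan of proof.} The plan is to fiber the operators $D_T-V$ and $L_T-V$ over the total momentum using center-of-mass coordinates, to observe that the fiber at zero total momentum is exactly $K_T-V$, and to show that at $T=T_c$ every fiber is nonnegative on the symmetric (even) sector, so that the bottom of the spectrum equals $0$ and is governed by the zero-total-momentum fiber.

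First I would set up the decomposition. Since $h=-\Delta-\mu$ is translation invariant, the change of variables $X=(x+y)/2$, $r=x-y$ followed by a Fourier transform in $X$ identifies $L^2(\BR^d\times\BR^d)$ with $\int^{\oplus}_{\BR^d}L^2(\BR^d)\,\dd P$, fibered over the total momentum $P$, so that $D_T$ acts on the fiber at $P$ as multiplication in the relative momentum $p$ by $D_T^{(P)}(p)=\tfrac12 K_T(P/2+p)+\tfrac12 K_T(P/2-p)$ and $L_T$ by
\[
L_T^{(P)}(p)=\frac{(P/2+p)^2+(P/2-p)^2-2\mu}{\tanh\tfrac{(P/2+p)^2-\mu}{2T}+\tanh\tfrac{(P/2-p)^2-\mu}{2T}},
\]
while $V$ acts on every fiber by the same operator, multiplication by $V(r)$ in the relative coordinate. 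The symmetry constraint $\alpha(x,y)=\alpha(y,x)$ becomes evenness of the fiber, $\psi(p)=\psi(-p)$, and at $P=0$ both multipliers collapse to $K_T(p)$, so the restriction of $D_T-V$ and of $L_T-V$ to zero total momentum is $K_T-V$, as already noted in Section~\ref{sec:results}. Since $D_T^{(P)}$ and $L_T^{(P)}$ are bounded below by $2T$ and grow like $p^2$ uniformly for $P$ in compact sets, while $V$ is $P$-independent, the KLMN construction is compatible with this decomposition, which I would take for granted.

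The crux is a fiberwise lower bound at $T=T_c$. Fixing $P$ and an even $\psi\in L^2(\BR^d)$, substituting $q=P/2+p$ in the first half of $\langle\psi,D_{T_c}^{(P)}\psi\rangle$ and $q=P/2-p$ in the second, and using $|\psi(-q)|=|\psi(q)|$, gives
\[
\langle\psi,D_{T_c}^{(P)}\psi\rangle=\int_{\BR^d}K_{T_c}(q)\,|\psi(q-P/2)|^2\,\dd q=\langle U_P\psi,\,K_{T_c}\,U_P\psi\rangle,
\]
where $(U_P\psi)(q):=\psi(q-P/2)$; since $U_P$ is multiplication by a phase in the relative coordinate, also $\langle U_P\psi,V U_P\psi\rangle=\langle\psi,V\psi\rangle$, hence $\langle\psi,(D_{T_c}^{(P)}-V)\psi\rangle=\langle U_P\psi,(K_{T_c}-V)U_P\psi\rangle\geq 0$. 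Here I use that the hypothesis $\inf\sigma_a(K_{T_c}-V)\geq\inf\sigma_s(K_{T_c}-V)=0$, together with the fact that $K_{T_c}-V$ is reduced by the even/odd splitting, forces $K_{T_c}-V\geq 0$ on \emph{all} of $L^2(\BR^d)$ --- which is needed because $U_P\psi$ is in general neither even nor odd, and this is precisely why the odd sector appears in the assumption. Integrating over $P$ then yields $\inf\sigma_s(D_{T_c}-V)\geq 0$, and $\inf\sigma_s(L_{T_c}-V)\geq\inf\sigma_s(D_{T_c}-V)\geq 0$ because $L_T\geq D_T$. For the matching upper bound I would test with states whose fibers are concentrated near $P=0$: since $D_{T_c}^{(P)}\to K_{T_c}$ and $L_{T_c}^{(P)}\to K_{T_c}$ as $P\to 0$, this gives $\inf\sigma_s(D_{T_c}-V)\leq\inf\sigma_s(K_{T_c}-V)=0$ and likewise $\inf\sigma_s(L_{T_c}-V)\leq 0$. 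Hence $\inf\sigma_s(D_{T_c}-V)=\inf\sigma_s(L_{T_c}-V)=0$. Assuming $T_c>0$ (the only interesting case), $T_c$ then lies in $\{T>0:\inf\sigma_s(D_T-V)\leq 0\}$, so $T_u\geq T_c>0$; since $D_T$ is strictly increasing, $T_u$ is the unique temperature with $\inf\sigma_s(D_{T_u}-V)=0$, and as $T_c$ is such a temperature, $T_u=T_c$. The identical argument for $L_T$, with the corresponding uniqueness of $T_l$ (cf.~\cite{hainzl_bcs_2008-1}), gives $T_l=T_c$.

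I expect the only real obstacle to be the bookkeeping in the first step: checking carefully that the center-of-mass Fourier transform intertwines the \emph{form-defined} operators $D_T-V$ and $L_T-V$ with the stated direct integrals, that the symmetric subspace is carried onto the even fibers, and that the form identity $\langle\alpha,(D_{T_c}-V)\alpha\rangle=\int_{\BR^d}\langle\widehat\alpha(P,\cdot),(D_{T_c}^{(P)}-V)\widehat\alpha(P,\cdot)\rangle\,\dd P$ holds on the form domain. Once this is settled, the lower bound is the one-line momentum-shift computation above and the upper bound a routine trial-state argument; the content of the hypothesis is exactly that it makes $K_{T_c}-V$ nonnegative on the full space rather than only on the symmetric sector.
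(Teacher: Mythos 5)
Your proof is correct and, modulo Fourier transforming in the center-of-mass variable, takes essentially the same route as the paper: the paper's change of slice $\tilde\alpha(r,y):=\alpha(r+y,y)$, which gives $\langle\alpha,(D_T-V)\alpha\rangle=\int\langle\tilde\alpha(\cdot,y),(K_T-V)\tilde\alpha(\cdot,y)\rangle\,\dd y$ for symmetric $\alpha$, is exactly the position-space version of your fiberwise momentum shift $U_P$, and both hinge on the same observation that the shifted slice need not be even in $r$, so that the hypothesis is used precisely to upgrade nonnegativity of $K_{T_c}-V$ from the even sector to all of $L^2(\BR^d)$. The paper then packages this as the chain $\inf \sigma_{s} (K_{T}-V)\geq \inf \sigma_{s} (L_{T}-V)\geq \inf \sigma_s (D_{T}-V) \geq \inf \sigma (K_T-V)$, which subsumes both your lower bound and your trial-state upper bound and disposes of the $T_u$ and $T_l$ conclusions in one step.
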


\begin{rem}\label{rem:uniquetc}
The conditions in Lemma~\ref{lea:uniquetc} are satisfied in several situations.
The first example are interactions $V\in L^1$ with nonnegative Fourier transform $\widehat{V}(p)=(2\pi)^{-d/2}\int_{\BR^d} e^{-i p\cdot r} V(r) \dd r \geq 0$ and $\widehat{V}(0)>0$ \cite{deuchert_persistence_2018}.
To see this, we use analogously to \cite{hainzl_bardeencooperschrieffer_2016} that $\widehat{V}\geq 0$ implies that $\langle \widehat \psi, \widehat{V} * \widehat \psi  \rangle \leq \langle |\widehat \psi|, \widehat{V} * |\widehat \psi |\rangle$. 
Now the point is that if $\psi$ is odd, then $|\widehat \psi|$ is even and hence $\inf \sigma_a (K_{T}-V)\geq \inf \sigma_s(K_{T}-V) $.
Furthermore, if $\mu>0$ and $\widehat V(0)>0$, there is a unique $T_c>0$ such that $\sigma_s(K_{T_c}-V)=0 $ \cite{frank_critical_2007}.

There are more general examples in $d\in\{1,2,3\}$ at weak enough coupling and $\mu>0$.
Consider interactions $\lambda V$ with $V\in L^1$.
The weak coupling properties are encoded in the operator $\mathcal{V}_\mu:L^2(\BS^{d-1})\to L^2(\BS^{d-1})$,
with integral kernel
\begin{equation}\label{eq:Vmu}
\mathcal{V}_\mu(p,q)=\frac{1}{(2\pi)^{d/2}} \widehat V(\sqrt{\mu}(p-q)).
\end{equation}
Let $e_\mu^{s/a}:=\sup \sigma_{s/a} \mathcal{V}_\mu$.
Note that for $d=1$ we have $e_\mu^{s/a}=\frac{\widehat V(0) \pm \widehat V(2\sqrt{\mu})}{(2\pi)^{1/2}}$, where $s$ corresponds to the $+$ and $a$ to the $-$ sign, respectively.
If $e_\mu^s>0$, for all $\lambda>0$ there is a temperature $T_c(\lambda)>0$ such that $\inf \sigma_s(K_{T_c(\lambda)}-V) =0$ \cite{frank_critical_2007, henheik_universality_2023}.
Furthermore, if $e_\mu^s >e_\mu^a$, then the ground state of $K_{T_c(\lambda)}-\lambda V$ is even at weak enough coupling.
This follows from the asymptotics of the corresponding Birman-Schwinger operators proved in \cite{frank_critical_2007, henheik_universality_2023}.
Therefore, the conditions in Lemma~\ref{lea:uniquetc} to have a unique critical temperature are satisfied at weak enough coupling if $e_\mu^s>\max\{e_\mu^a,0\}$.
The condition $e_\mu^s>\max\{e_\mu^a,0\}$ is satisfied in dimension one if $\widehat{V}(0)$ and $\widehat{V}(2\sqrt{\mu})$ are positive.
In dimensions two and three, the condition is satisfied for instance if $\widehat V\geq 0, \widehat V\not \equiv 0$ on the ball $|p|<2\sqrt{\mu}$.
\end{rem}
\begin{rem}
The argument that we use to prove Lemma~\ref{lea:uniquetc} was implicitly used in \cite{deuchert_persistence_2018} to show that for translation invariant systems the minimizer of the BCS functional is translation invariant in a temperature interval below $T_c$ for radial interactions.
The main results \cite[Theorem 2.1 and Theorem 2.8]{deuchert_persistence_2018} effectively assume that $\inf \sigma(K_{T_c}-V)= \inf \sigma_s(K_{T_c}-V)=0$, which is the setting where Lemma~\ref{lea:uniquetc} holds.
However, they do not clearly state that this is an additional assumption on $V$.
\end{rem}

While we have now seen cases where $T_u=T_l$, the following Theorem provides an example where $T_u>T_l$.
Let $e_0^s=2\frac{\widehat V(0)}{(2\pi)^{1/2}}$.
\begin{thm}\label{thm:tu-weak-coupling}
Let $\mu>0$.
Let $V$ satisfy $V(r)=V(-r)$ as well as $(1+|\cdot|^2)V \in L^1(\BR)$ and $e_\mu^s>0$.
Then for all $\lambda>0$ the temperatures $T_l(\lambda)$ and $T_u(\lambda)$ are positive.
As $\lambda\to 0$ the asymptotics of $T_l(\lambda)$ and $T_u(\lambda)$ are given by
\begin{align}
T_l(\lambda)&=\mu e^{-\mu^{1/2}/(\lambda e_\mu^s ) +O(1)},\\
T_u(\lambda)&=\mu e^{-\mu^{1/2}/(\lambda \max\{e_\mu^s,\frac{1}{2}e_0^s \}) +O(1)}.
\end{align}
In particular, if $\hat V(0)>0$ and $\hat V(2\sqrt{\mu})< 0$,  the ratio $T_u(\lambda)/T_l(\lambda) \to \infty$ in the limit $\lambda \to 0$.
\end{thm}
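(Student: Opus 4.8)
\emph{Sketch of proof.}
Both $T_l(\lambda)$ and $T_u(\lambda)$ are critical temperatures of operators on $L^2(\BR\times\BR)$ that are block‑diagonal in the total momentum $P=2w$: since $h=-\Delta-\mu$ is translation invariant and $V$ depends only on $x-y$, in the block with total momentum $2w$, writing $p$ for the relative momentum and $\xi_\pm(p)=(w\pm p)^2-\mu$, the operators $L_T$ and $D_T$ act as multiplication by
\[
L_T^{(w)}(p)=\frac{\xi_+(p)+\xi_-(p)}{\tanh(\xi_+(p)/2T)+\tanh(\xi_-(p)/2T)},
\qquad
D_T^{(w)}(p)=\tfrac12 K_T(\xi_+(p))+\tfrac12 K_T(\xi_-(p)),
\]
with $K_T(\xi)=\xi/\tanh(\xi/2T)$, while $V$ acts as convolution by $(2\pi)^{-1/2}\widehat V$. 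The swap $x\leftrightarrow y$ fixes the total momentum and sends $p\mapsto-p$, so the symmetry $\alpha(x,y)=\alpha(y,x)$ becomes evenness in $p$ within each block; hence $T_l(\lambda)=\sup_{w}T_l^{(w)}(\lambda)$ and $T_u(\lambda)=\sup_{w}T_u^{(w)}(\lambda)$, where $T_{l}^{(w)}(\lambda)$, resp.\ $T_u^{(w)}(\lambda)$, is the temperature at which $\inf\sigma_s(L_T^{(w)}-\lambda V)$, resp.\ $\inf\sigma_s(D_T^{(w)}-\lambda V)$, vanishes, characterised by the Birman-Schwinger condition that the largest eigenvalue of $\lambda|V|^{1/2}(L_T^{(w)})^{-1}|V|^{1/2}\sgn V$, resp.\ of the analogue with $D_T^{(w)}$, restricted to even functions, equals $1$.

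Both symbols are $\ge 2T$, and the resolvents $(L_T^{(w)})^{-1}$ and $(D_T^{(w)})^{-1}$ develop a logarithmic enhancement as $T\to0$ only near momenta $p$ at which \emph{both} $\xi_+(p)$ and $\xi_-(p)$ vanish, which forces $w\in\{0,\pm\sqrt\mu\}$. For $w=0$ one has $L_T^{(0)}=D_T^{(0)}=K_T$, which near the two Fermi points $p=\pm\sqrt\mu$ behaves like $2\sqrt\mu\,|p\mp\sqrt\mu|$ down to the scale $T/\sqrt\mu$; this yields a rank‑two singular part, and the sandwiched Birman-Schwinger operator restricted to even functions has top eigenvalue $\lambda\mu^{-1/2}\ln(\mu/T)\,e_\mu^s+O(\lambda)$, because the associated $2\times 2$ matrix is $(2\pi)^{-1/2}\bigl(\widehat V(0)\,\delta_{jk}+\widehat V(2\sqrt\mu)(1-\delta_{jk})\bigr)$, whose symmetric eigenvalue is $e_\mu^s$; this gives $T_l^{(0)}(\lambda)=T_u^{(0)}(\lambda)=\mu e^{-\mu^{1/2}/(\lambda e_\mu^s)+O(1)}$. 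For $w=\pm\sqrt\mu$ the two operators differ crucially: there $\xi_\pm(p)=\pm2\sqrt\mu\,p+p^2$ have opposite signs for small $p\ne0$, so $D_T^{(\pm\sqrt\mu)}(p)=\tfrac12 K_T(\xi_+)+\tfrac12 K_T(\xi_-)\approx 2\sqrt\mu\,|p|$ is small and produces a rank‑one singular part at $p=0$, with Birman-Schwinger coefficient $\lambda\mu^{-1/2}\ln(\mu/T)\cdot(2\pi)^{-1/2}\widehat V(0)=\lambda\mu^{-1/2}\ln(\mu/T)\cdot\tfrac12 e_0^s$; a test function with $\widehat\psi(0)\ne0$ can be taken even, so this direction is seen by $\sigma_s$ (it is annihilated in $\sigma_a$), whence $T_u^{(\pm\sqrt\mu)}(\lambda)=\mu e^{-\mu^{1/2}/(\lambda\cdot\frac12 e_0^s)+O(1)}$ if $\tfrac12 e_0^s>0$ (and this block does not compete with $w=0$ if $\widehat V(0)\le 0$). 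By contrast $L_T^{(\pm\sqrt\mu)}(p)=\dfrac{2p^2}{\tanh(\xi_+/2T)+\tanh(\xi_-/2T)}$ is \emph{exponentially large} for $|p|\gg T/\sqrt\mu$, since the two hyperbolic tangents nearly cancel, so $(L_T^{(\pm\sqrt\mu)})^{-1}$ is bounded uniformly in $T$ and produces no enhancement.

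For every other $w$ the two quadratics never vanish simultaneously, so $\inf_p L_T^{(w)}(p)$ and $\inf_p D_T^{(w)}(p)$ are bounded below uniformly in $T$ (and tend to $+\infty$ as $|w|\to\infty$), which makes the corresponding Birman-Schwinger operators of norm $<1$ for $T$ above the claimed thresholds; the delicate point is uniformity as $w\to0$ or $w\to\pm\sqrt\mu$, where the two nearby Fermi points separate at the $w$‑dependent scale $\sim\sqrt\mu\,|w\mp\sqrt\mu|$ (resp.\ $\sim\sqrt\mu\,|w|$) and thus cut the logarithm off at that scale rather than at $T$, so that no such block improves on $w\in\{0,\pm\sqrt\mu\}$. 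Making this quantitative---and extracting the logarithm with only an $O(1)$ error, which is where the hypothesis $(1+|\cdot|^2)V\in L^1(\BR)$ (equivalently $\widehat V\in C^2$) enters, to control the Birman-Schwinger remainders---is the bulk of the work; I expect the uniform‑in‑$w$ upper bound near the special momenta to be the main obstacle. The per‑block computations for $w=0$ and $w=\pm\sqrt\mu$ are routine adaptations of the full‑space weak‑coupling analysis \cite{frank_critical_2007,henheik_universality_2023}.

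Collecting the blocks (for small $\lambda$ only the blocks $w\in\{0,\pm\sqrt\mu\}$ can be critical), and noting that $e_\mu^s>0$ already makes the $w=0$ block log‑divergent, so that $\inf\sigma_s(L_T-\lambda V)<0$ for small $T$ and every $\lambda>0$, we get $T_l(\lambda)\ge T_l^{(0)}(\lambda)>0$ and $T_u(\lambda)\ge T_l(\lambda)>0$ for all $\lambda>0$, together with $T_l(\lambda)=\mu e^{-\mu^{1/2}/(\lambda e_\mu^s)+O(1)}$ and $T_u(\lambda)=\max\{T_u^{(0)}(\lambda),T_u^{(\pm\sqrt\mu)}(\lambda)\}=\mu e^{-\mu^{1/2}/(\lambda\max\{e_\mu^s,\frac12 e_0^s\})+O(1)}$. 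Finally, if $\widehat V(0)>0$ and $\widehat V(2\sqrt\mu)<0$ then $0<e_\mu^s<\tfrac12 e_0^s$, so $\max\{e_\mu^s,\tfrac12 e_0^s\}=\tfrac12 e_0^s$ and
\[
\ln\frac{T_u(\lambda)}{T_l(\lambda)}=\frac{\mu^{1/2}}{\lambda}\Bigl(\frac{1}{e_\mu^s}-\frac{2}{e_0^s}\Bigr)+O(1),
\]
which tends to $+\infty$ as $\lambda\to0$ because $1/e_\mu^s>2/e_0^s$; hence $T_u(\lambda)/T_l(\lambda)\to\infty$.
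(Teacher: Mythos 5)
Your route is the same as the paper's: decompose by total momentum (your symbols $L_T^{(w)}(p)$ and $D_T^{(w)}(p)$ are exactly the paper's $B_T(p,q)^{-1}$ and $N_T(p,q)^{-1}$ with $w=q$), apply the Birman--Schwinger principle block by block on even functions of the relative momentum, and locate the logarithmic enhancements: a rank-two singular part at $q=0$ whose even eigenvalue is $e_\mu^s$, and, for $D_T$ only, a rank-one singular part at $|q|=\sqrt\mu$ (the $\FT_0$ direction, visible on even functions) worth $(2\pi)^{-1/2}\widehat V(0)=\tfrac12 e_0^s$, while the near-cancellation of the two hyperbolic tangents kills any enhancement of $B_T$ at $|q|=\sqrt\mu$. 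All of these structural identifications, and the way you assemble them into the asymptotics for $T_l$, $T_u$ and the ratio statement, agree with the paper.

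The gap is that the part you explicitly defer (``making this quantitative \dots is the bulk of the work'') \emph{is} the proof. The paper supplies it in two lemmas: Lemma~\ref{lea:1d-approx} shows that $V^{1/2}N_T(\cdot,q)|V|^{1/2}$ (resp.\ $V^{1/2}B_T(\cdot,q)|V|^{1/2}$) equals the explicit singular operator $V^{1/2}W_T(q)|V|^{1/2}$ (resp.\ $V^{1/2}Q_T(q)|V|^{1/2}$) up to an error bounded uniformly in $T>0$ and $q\in\BR$ --- this is where $(1+|\cdot|^2)V\in L^1$ enters, through a Hilbert--Schmidt bound with kernel controlled by $|V(x)|^{1/2}(1+|x-y|)|V(y)|^{1/2}$ --- and Lemma~\ref{lea:1d-w-asy} gives the coefficients: upper bounds $\mu^{-1/2}\ln\bigl(1/(T/\mu+|q|/\sqrt\mu)\bigr)$ and $\tfrac12\mu^{-1/2}\ln\bigl(1/(T/\mu+||q|-\sqrt\mu|/\sqrt\mu)\bigr)$, uniformly in $q$, plus matching lower bounds at $q=0$ and $q=\sqrt\mu$. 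These are precisely the uniform-in-$q$ estimates near and away from the special momenta whose absence you flag; without them neither the $O(1)$ error in the exponent nor the claim that no other $q$-block competes is established. Note also that your per-block observation that $\inf_p D_T^{(w)}(p)$ is bounded below uniformly in $T$ for fixed $w\notin\{0,\pm\sqrt\mu\}$ does not suffice: what is needed (and what the $q$-dependent logarithms above deliver) is a bound on the Birman--Schwinger norm that is uniform as $q$ approaches $0$ or $\pm\sqrt\mu$, since the cutoff scale of the logarithm is $T/\mu+|q|/\sqrt\mu$, not $T$ alone. So: correct strategy and correct conclusions, but the two key estimates are asserted rather than proved.
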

This theorem shows, that with the existing methods it is not possible to determine whether for general $V$ there is a unique critical temperature for the BCS functional, even in a translation-invariant setting.
It is thus an open problem to clarify, whether there is a unique critical temperature, and, if yes, where in the interval $[T_l,T_u]$ it lies.

\begin{rem}
Considering a setting analogous to Theorem~\ref{thm:tu-weak-coupling} in dimensions two and three, we expect $T_l(\lambda)$ and $T_u(\lambda)$ both to be given by
\begin{equation}\label{tc:asy-23}
\mu e^{-\mu^{1-d/2}/(\lambda e_\mu^s ) +O(1)}
\end{equation}
for $\lambda\to 0$.
In particular, finding suitable $V$ where $T_l(\lambda)<T_u(\lambda)$ for $d\in\{2,3\}$, would require a more involved analysis going to higher orders of the expansion.

Let us explain the intuition behind \eqref{tc:asy-23}.
First, notice that this is the leading order asymptotics of the critical temperature defined using $K_T-V$, computed in \cite{hainzl_bardeencooperschrieffer_2016,henheik_universality_2023}.
In dimension one, for both $T_l$ and $T_u$ the term $ e_\mu^s $ originates in the singularity of the respective Birman-Schwinger operator at total momentum zero.
Additionally, for $T_u$ the contribution $\frac{1}{2}e_0^s $ stems from a singularity of the Birman-Schwinger operator at total momentum $\sqrt{\mu}$, as explained in Sections~\ref{sec:wc} and \ref{sec:pf2}.
In dimensions two and three, however, the Birman-Schwinger operator only has a singularity at total momentum zero (see Section~\ref{sec:pf3}).
This is why we only expect the contribution $e_\mu^s$ from total momentum zero to appear.
Since at total momentum zero, $L_T-V$ and $D_T-V$ agree with $K_T-V$, we expect the asymptotics of $T_l$ and $T_u$ to agree to leading order with the asymptotics derived in \cite{hainzl_bardeencooperschrieffer_2016,henheik_universality_2023}.

\end{rem}

\begin{rem}
Consider $\mu>0$ and an interaction $\lambda V$ in 1d with $e_\mu^s >0$ and $\lambda>0$.
In order for $T_l(\lambda)$ and $T_u(\lambda)$ to agree at weak enough coupling, $e_\mu^s> e_\mu^a$ or equivalently $\widehat V(2\sqrt{\mu})>0$, is a sufficient condition according to Remark~\ref{rem:uniquetc}.
On the other hand, Theorem~\ref{thm:tu-weak-coupling} implies that $e_\mu^s\geq e_\mu^a$, or $\widehat V(2\sqrt{\mu})\geq 0$, is a necessary condition.
\end{rem}

\subsubsection{Half-spaces}
Assume that the interaction $\lambda V$ is such that on $\Omega_0=\BR^d$ at weak enough coupling there is a unique critical temperature $T_l^0(\lambda)=T_u^0(\lambda)=T_c^0(\lambda)$ determined through $\inf \sigma_s (K_{T_c^0(\lambda)}-\lambda V)=0$.
We introduce a Dirichlet or Neumann boundary cutting space in half and consider $\Omega_1=(0,\infty)\times \BR^{d-1}$ and the corresponding $T_l^1(\lambda)$ and $T_u^1(\lambda)$.
It was shown in \cite{hainzl_boundary_2022, roos_bcs_2023} that there are systems where $T_l^1(\lambda)>T_c^0(\lambda)$, i.e.~where in the presence of a boundary, superconductivity persists at higher temperatures.
Furthermore, it was shown that the relative difference of $T_l^1(\lambda)$ and $T_c^0(\lambda)$ vanishes in the weak coupling limit.
Here we extend these results and show that the relative difference of $T_u^1(\lambda)$ and $T_c^0(\lambda)$ vanishes in the weak coupling limit.
This bounds the temperature range at which boundary superconductivity can occur.

\begin{thm}\label{thm:rel-tu-difference}
Let $\mu>0$ and $d\in\{1,2,3\}$.
Let $V$ satisfy Assumption~\ref{as1}, $V\geq 0$,  $(1+|\cdot|^2)V \in L^1(\BR)$ for $d=1$ and $V \in L^1(\BR^d)$ for $d\in\{2,3\}$, and $e_\mu^s>\max\{0,e_\mu^a\}$.
Then
\begin{equation}
\lim_{\lambda\to 0} \frac{T_u^1(\lambda)-T_c^0(\lambda)}{T_c^0(\lambda)}=0
\end{equation}
\end{thm}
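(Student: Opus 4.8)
The plan is to translate the problem, via the Birman--Schwinger principle, into a comparison of two operator norms in the limit $T\to 0$. Since $V\ge 0$, one has $\inf\sigma_s(D_T^{\Omega_1}-\lambda V)\le 0$ exactly when $\lambda\,g(T)\ge 1$, where $g(T):=\|V^{1/2}(D_T^{\Omega_1})^{-1}V^{1/2}|_s\|$; similarly, for small $\lambda$ the full-space critical temperature satisfies $\lambda\,f(T_c^0(\lambda))=1$ with $f(T):=\|V^{1/2}K_T^{-1}V^{1/2}|_s\|$ (for translation-invariant states $D_T$, $L_T$ and $K_T$ coincide, and $s$ now denotes even functions on $\BR^d$). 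The functions $f,g$ are continuous and strictly decreasing, $g(T_u^1(\lambda))=f(T_c^0(\lambda))=1/\lambda$, and from \cite{frank_critical_2007, henheik_universality_2023} one has $f(T)=b_d(\mu)\,e_\mu^s\ln(\mu/T)+O(1)$ with $b_d(\mu)=\mu^{d/2-1}$ and $Tf'(T)\to -b_d(\mu)e_\mu^s$ as $T\to 0$. Granting $T_u^1(\lambda)\ge T_c^0(\lambda)$ (which for Neumann, and with more care for Dirichlet, follows from the trial-state arguments of \cite{hainzl_boundary_2022, roos_bcs_2023}), the theorem reduces to the single asymptotic statement
\[
g(T)\le f(T)+o(1)\qquad (T\to 0),
\]
since then $f(T_c^0(\lambda))=g(T_u^1(\lambda))\le f(T_u^1(\lambda))+o(1)$, and the mean value theorem together with $Tf'(T)\to-b_d(\mu)e_\mu^s$ and $T_u^1(\lambda),T_c^0(\lambda)\to 0$ forces $1\le T_u^1(\lambda)/T_c^0(\lambda)\le 1+o(1)$.

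For the key inequality I would follow the strategy of \cite{roos_bcs_2023}. Reflecting across the boundary gives an isometry $L^2(\Omega_1)\cong L^2_\sigma(\BR^d)$ (odd extension for Dirichlet, $\sigma=-1$; even extension for Neumann, $\sigma=+1$) under which $K_T^{\Omega_1}\cong K_T^{\BR^d}|_\sigma$, so that
\[
(D_T^{\Omega_1})^{-1}=\int_0^\infty e^{-\frac t2 K_T^{\Omega_1}}\otimes e^{-\frac t2 K_T^{\Omega_1}}\,\dd t,\qquad [e^{-sK_T^{\Omega_1}}](x,x')=k_s(x-x')+\sigma\,k_s(x-Rx'),
\]
with $k_s$ the translation-invariant full-space kernel and $R$ the reflection in the first coordinate. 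Inserting this into $V^{1/2}(D_T^{\Omega_1})^{-1}V^{1/2}$ and expanding the product of the two brackets produces a ``direct'' operator --- precisely the restriction to $\Omega_1\times\Omega_1$ of the full-space Birman--Schwinger operator $V^{1/2}(D_T^{\BR^d})^{-1}V^{1/2}$, whose norm equals $f(T)$ for small $T$ (here one uses $e_\mu^s>e_\mu^a$: the parity restriction from the reflection mixes even and odd relative-coordinate functions, but the even part dominates; and in $d=1$ one also uses $e_\mu^s>\tfrac12 e_0^s$, which holds because $\widehat V(2\sqrt\mu)>0$ by the hypothesis $e_\mu^s>e_\mu^a$, to rule out a competing logarithmic divergence at total momentum $2\sqrt\mu$) --- plus three ``boundary'' operators in which one or both heat-type kernels carry the reflected argument $x-Rx'=(x_1+x_1',x_\perp-x_\perp')$.

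The crux is to show that the three boundary operators raise $g(T)$ by at most $o(1)$. The point is locality and the absence of resonance near the boundary: each reflected kernel decays in $x_1+x_1'$ (and $y_1+y_1'$), so the boundary operators are concentrated where both particles lie within $O(1)$ of the boundary, and there --- because the spatial confinement of the center-of-mass variable $x_1+y_1$ destroys the $P=0$ resonance responsible for the $\ln(\mu/T)$ --- they remain bounded uniformly in $T$ rather than diverging logarithmically; the same confinement argument shows that the direct operator itself contributes only $o(f(T))$ on functions supported near the boundary. Decomposing a trial state by its distance to the boundary with an IMS-type localization, the near-boundary part then contributes at most $O(1)$ and the bulk part at most $f(T)$, with localization errors $o(1)$, giving $g(T)\le f(T)+o(1)$. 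In dimensions $d\in\{2,3\}$ this is cleanest after passing to the transverse center-of-mass momentum $P_\perp$, where the entire $\ln(\mu/T)$ sits in the single fibre $P_\perp=0$, untouched by the boundary-localized operators.

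The step I expect to be the main obstacle is precisely this last one, and the reason is the required precision: for the \emph{relative} difference (rather than merely the leading exponential order) to vanish, the boundary operators must be estimated down to $O(1)$ uniformly in $T$ and the localization must cost only $o(1)$ --- i.e.\ one genuinely needs the additive bound $g(T)=f(T)+o(1)$, not just $g(T)=(1+o(1))f(T)$. This is exactly the level of control achieved for $T_l$ in \cite{roos_bcs_2023}; since $D_T\le L_T$ pointwise, $D_T^{\Omega_1}$ is if anything easier to handle than $L_T^{\Omega_1}$, and the extra structure $D_T=\tfrac12 K_T\otimes 1+\tfrac12\,1\otimes K_T$ (which makes $D_T^{\Omega_1}|_s$ agree with $(K_T^{\Omega_1}\otimes 1)|_s$ on symmetric kernels) can be used to streamline some of the estimates.
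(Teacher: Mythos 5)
Your high-level strategy --- Birman--Schwinger reduction and the asymptotic estimate $g(T)\le f(T)+o(1)$, plus the mean-value-theorem step --- is the same as the paper's (which imports the framework from \cite{roos_bcs_2023}). But the mechanism you propose for proving that estimate differs from the paper's, and your key claim is incorrect in $d=1$.

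You assert that the three ``boundary'' operators (the terms with at least one reflected kernel) remain bounded uniformly in $T$, because spatial confinement of the centre-of-mass variable near the boundary destroys the $P=0$ resonance. That reasoning works for $B_T$ (i.e.\ for $T_l^1$, as in \cite{roos_bcs_2023}) and for $N_T$ in $d\in\{2,3\}$, but it is false for $N_T$ in $d=1$: the paper's Lemma~\ref{lea:1d-w-asy} establishes a \emph{second} logarithmic divergence, $n_T(\sqrt{\mu})\ge \tfrac12\mu^{-1/2}\ln(\mu/T)+O(1)$, at $|q|=\sqrt{\mu}$ (total momentum $2\sqrt{\mu}$). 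This divergence lives precisely in the non-zero total-momentum regime probed by the boundary operators and is \emph{not} suppressed by localisation near the boundary, so those operators are $O(\ln(1/T))$, not $O(1)$. You do invoke $e_\mu^s>\tfrac12 e_0^s$, but you attach it to the \emph{direct} operator and still claim the boundary pieces are uniformly bounded, which is internally inconsistent. The correct statement --- and what the paper actually proves in the unnumbered lemma of Section~\ref{sec:pf3} --- is the weaker one that for $|q|>\epsilon$ the fibre norm $\lVert V^{1/2}N_T(\cdot,q)V^{1/2}\rVert_s$ falls short of the maximum $\sup_q\lVert\cdot\rVert_s = f(T)$ by at least $c_1|\ln(c_2/T)|$, i.e.\ $E_T(q)\ge c_1|\ln(c_2/T)|$. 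Feeding this into the fibre-coupling structure of the half-space Birman--Schwinger operator (Lemmas 6.1--6.3 of \cite{roos_bcs_2023}, with $B_T\to N_T$) then gives $g(T)\le f(T)+o(1)$. Your IMS-in-real-space reorganisation would need to reproduce exactly this quantitative ``deficit at $|q|>\epsilon$'' input; without it the near-boundary piece cannot be dismissed as $O(1)$ in $d=1$.

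A second, smaller error: you write that since $D_T\le L_T$ pointwise, ``$D_T^{\Omega_1}$ is if anything easier to handle than $L_T^{\Omega_1}$.'' This is backwards. $D_T\le L_T$ gives $N_T\ge B_T$, so the Birman--Schwinger operator for $D_T$ is \emph{larger} and harder to bound from above; that is precisely why $T_u\ge T_l$, why the present theorem is a genuine strengthening of \cite[Theorem~1.2]{roos_bcs_2023}, and why the new Lemmas~\ref{lea:1d-approx}--\ref{lea:1d-w-asy} are needed to handle the extra $|q|=\sqrt{\mu}$ singularity that $B_T$ does not have.

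Finally, for $d\in\{2,3\}$ your remark about $P_\perp=0$ points in the right direction, but the actual work in the paper is a direct momentum-space estimate showing $\sup_{T>0}\sup_{|q|>\epsilon}\lVert V^{1/2}N_T(\cdot,q)V^{1/2}\rVert<\infty$ via the bound $N_T\le M$ and the decomposition of $\BR^d$ into the regions $A_1,A_2,A_3$; none of that is supplied in your sketch.
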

This implies that asymptotically $T_u^1(\lambda)=T_c^0(\lambda)(1+o(1))$.
For the latter, the asymptotics were computed in \cite{hainzl_bardeencooperschrieffer_2016, henheik_universality_2023}, which to leading order read
\[
T_u^1(\lambda)=T_c^0(\lambda)(1+o(1))= \mu \exp\left(-\frac{\mu^{1-d/2}}{\lambda  e_\mu^s} +O(1) \right).
\]
In particular, also on half-spaces superconductivity only occurs at temperatures exponentially small as $\lambda \to 0$.
\begin{rem}
We believe that the assumption $V\geq0$ in Theorem~\ref{thm:rel-tu-difference} is not necessary for the result to hold.
However, it is required by our proof method.
\end{rem}

In the last result we consider the strong coupling limit.
Recall from Remark~\ref{rem:uniquetc} that for the system on $\BR$ with interaction $V=\lambda \delta$, for all $\lambda>0$ there is a unique critical temperature $T_c^0(\lambda)=T_l^0(\lambda)=T_u^0(\lambda)$.
In \cite{hainzl_boundary_2022} it was shown that for Dirichlet boundary conditions on the half-line, the temperature $T_l^1(\lambda)$ satisfies
\begin{equation}
\lim_{\lambda\to \infty} \frac{T_l^1(\lambda)-T_c^0(\lambda)}{T_c^0(\lambda)}=0.
\end{equation}
Here we improve this result and show that the relative difference of $T_u^1(\lambda)$ and $T_c^0(\lambda)$ vanishes in the strong coupling limit.

\begin{thm}\label{thm:rel-tu-difference-strongc}
Let $\mu>0$, $d=1$ and let the interaction be $V=\lambda \delta$ for $\lambda>0$.
Assume Dirichlet boundary conditions on $(0,\infty)$.
In the strong coupling limit
\begin{equation}
\lim_{\lambda\to \infty} \frac{T_u^1(\lambda)-T_c^0(\lambda)}{T_c^0(\lambda)}=0.
\end{equation}
\end{thm}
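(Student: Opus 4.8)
The plan is to prove $\liminf_{\lambda\to\infty}T_u^1(\lambda)/T_c^0(\lambda)\ge 1$ and $\limsup_{\lambda\to\infty}T_u^1(\lambda)/T_c^0(\lambda)\le 1$ separately. The first bound is immediate: since $L_T\ge D_T$ \cite[Lemma 2.7]{roos_bcs_2023} one has $\inf\sigma_s(D_T^1-V)\le\inf\sigma_s(L_T^1-V)$ for all $T$, hence $T_l^1(\lambda)\le T_u^1(\lambda)$, and by the quoted result of \cite{hainzl_boundary_2022} one has $T_l^1(\lambda)/T_c^0(\lambda)\to 1$. So the content is the upper bound: given $\varepsilon>0$, I want $D_T^1-\lambda\delta\ge 0$ on the symmetric subspace for $T=(1+\varepsilon)T_c^0(\lambda)$ and all large $\lambda$, i.e.\ $\lambda\|B_T^1\|\le 1$, where $B_T^1$ is the Birman--Schwinger operator of the contact interaction on the Dirichlet half-line. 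Recall that $T_c^0(\lambda)$ is the unique zero of $\lambda g_T(0)-1$ with $g_T(0)=\frac1{2\pi}\int_{\BR}K_T(p^2-\mu)^{-1}\,\dd p$, and $g_T(0)\sim C\,T^{-1/2}$ as $T\to\infty$; in particular $T_c^0(\lambda)\to\infty$ and the momentum scale that matters is $\sqrt T$.

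First I would make $B_T^1$ explicit through the product representation $D_T^{-1}=\int_0^\infty e^{-\frac u2 K_T(h_x)}\,e^{-\frac u2 K_T(h_y)}\,\dd u$ (the factors commute since $D_T=\frac12 K_T(h_x)+\frac12 K_T(h_y)$) and the Dirichlet image formula: the kernel of $e^{-\frac u2 K_T(h^1)}$ is $f_u(x-y)-f_u(x+y)$ with $\widehat{f_u}(p)=e^{-\frac u2 K_T(p^2-\mu)}$. Hence $B_T^1$ acts on $L^2((0,\infty))$ with kernel $\int_0^\infty\bigl(f_u(x-y)-f_u(x+y)\bigr)^2\,\dd u$. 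In the Fourier sine transform, with $w$ the Fourier transform of the even extension of $v\in L^2((0,\infty))$,
\[
\langle v,B_T^1 v\rangle=\frac1{2\pi^2}\int_0^\infty\!\!\int_0^\infty\frac{\bigl|w(p-q)-w(p+q)\bigr|^2}{K_T(p^2-\mu)+K_T(q^2-\mu)}\,\dd p\,\dd q ,\qquad \|v\|^2=\frac1{4\pi}\int_{\BR}|w|^2 .
\]
Dropping the image term $w(p+q)$ leaves a quantity bounded by $g_T(0)\|v\|^2$ — this is the full-space bound $\lambda g_{T_c^0}(0)=1$, which is saturated in the limit of a bulk pair — so it suffices to show that keeping $w(p+q)$ raises the estimate by at most $o(g_T(0))\|v\|^2$, i.e.\ that the displayed form is $\le(1+o(1))\,g_T(0)\,\|v\|^2$ uniformly in $w$ as $T\to\infty$.

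For this, substitute $P=p-q$, $Q=p+q$ ($\dd p\,\dd q=\tfrac12\,\dd P\,\dd Q$, $-Q<P<Q$) and use the $p\leftrightarrow q$ symmetry of the denominator; the claim becomes
\[
\int_0^\infty\!\!\int_{-Q}^{Q}\frac{\bigl|w(P)-w(Q)\bigr|^2}{K_T\!\bigl(\tfrac{(Q+P)^2}{4}-\mu\bigr)+K_T\!\bigl(\tfrac{(Q-P)^2}{4}-\mu\bigr)}\,\dd P\,\dd Q\ \le\ (1+o(1))\,\pi\, g_T(0)\int_{\BR}|w|^2 .
\]
In the strong-coupling regime $\mu/T\to 0$, with $K_T(p^2-\mu)=2T\bigl(1+O(p^4/T^2+\mu/T)\bigr)$ for $|p|\lesssim\sqrt T$ and $K_T(p^2-\mu)\gtrsim p^2$ for $|p|\gtrsim\sqrt T$. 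I would split the $Q$-integral at $Q\sim\sqrt T$: (i) for $Q\gtrsim\sqrt T$ the $w(Q)$-part meets a denominator of size $\gtrsim Q^2$ and, after integration, contributes $o(g_T(0))\int|w|^2$; (ii) in the dominant part one argues $w(Q)=o(w(P))$ on the support, so the numerator is $|w(P)|^2(1+o(1))$ and $\int_{-Q}^{Q}|w(P)|^2\dd P=(1+o(1))\int_{\BR}|w|^2$; (iii) what remains is $(1+o(1))\,\pi g_T(0)\int_{\BR}|w|^2$, since $\frac1{2\pi}\int_0^\infty K_T(Q^2/4-\mu)^{-1}\dd Q=g_T(0)$, so the constant is exactly $1$.

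The hard part is step (ii), which is also the only place the strong-coupling limit is genuinely needed: one must not bound $|w(P)-w(Q)|^2\le 2|w(P)|^2+2|w(Q)|^2$, since that loses a factor $2$ and would only yield $T_u^1(\lambda)\le 4\,T_c^0(\lambda)(1+o(1))$. Instead one exploits that the near-maximiser of $B_T^1$ is, like a bulk Cooper pair, concentrated in $w$ within a scale $o(\sqrt T)$ of $P=0$ — an a priori localisation forced by $K_T(p^2-\mu)\ge 2T$ in the denominator — and makes it quantitative via a Schur test against the weight $K_T(Q^2/4-\mu)^{-1}$. For $L_T$, and hence for $T_l^1$, this is in essence the estimate carried out in \cite{hainzl_boundary_2022}; the product representation of $D_T^{-1}$, which has no counterpart for $L_T$, is what makes the version for $T_u^1$ tractable. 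Combining the two bounds gives $\lim_{\lambda\to\infty}(T_u^1(\lambda)-T_c^0(\lambda))/T_c^0(\lambda)=0$.
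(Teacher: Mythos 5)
Your strategy---a direct variational estimate of the Dirichlet Birman--Schwinger form at the physical $(T,\mu)$ in the strong-coupling regime, with the image term exposed through $D_T^{-1}=\int_0^\infty e^{-\frac u2K_T(h_x)}e^{-\frac u2K_T(h_y)}\,\dd u$---is genuinely different from the paper's. There, no sharp $(1+o(1))$ bound as $T\to\infty$ is ever produced. One invokes \cite[Lemmas 5.1, 5.2]{hainzl_boundary_2022} to reduce $\lim_{\lambda\to\infty}(T_u^1-T_c^0)/T_c^0=0$ to the equality of the top eigenvalues of the half-line and full-line Birman--Schwinger operators built from $N_{T,\mu}$ at the \emph{rescaled} parameters $(T,\mu)=(1,0)$, together with a continuity estimate in $\mu$ (the paper's Lemma~\ref{lea:1d_strongc_1}). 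The one substantive new input is the pointwise inequality $N_{1,0}(p,q)\le k(p,q):=\min\{N_{1,0}(p,0),N_{1,0}(0,q)\}$, which follows from convexity of $x\mapsto x/\tanh x$ and feeds into \cite[Lemma 5.4]{hainzl_boundary_2022}. The product representation of $D_T^{-1}$, which you emphasize as essential, plays no role there.

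There is a genuine gap at your step (ii), which you yourself flag as ``the hard part,'' and it also undermines step (i). To conclude $T_u^1\le(1+\varepsilon)T_c^0$ you must bound $\langle v,B_T^1v\rangle$ for \emph{every} $v$, not merely for near-maximisers, so the assertion ``$w(Q)=o(w(P))$ on the support'' has no a priori content. Concretely, isolating the $|w(Q)|^2$ piece at $Q\gtrsim\sqrt T$, where the denominator is $\sim\max(Q^2,T)$, yields a contribution of size $\int_{Q>\sqrt T}|w(Q)|^2Q^{-1}\,\dd Q\lesssim T^{-1/2}\|w\|^2\sim g_T(0)\|w\|^2$---that is, $O(g_T(0))$ and not $o(g_T(0))$---unless one already knows that $w$ carries negligible mass at $|P|\gtrsim\sqrt T$. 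The cross term $-2\,\mathrm{Re}\,w(P)\overline{w(Q)}$ is sign-indefinite and just as delicate. The Schur test you gesture at would have to control the full form $\int\!\!\int K(P,Q)|w(P)-w(Q)|^2\,\dd P\,\dd Q$ with a concrete test weight, none being offered; the concentration of near-maximisers that you wish to ``exploit'' is exactly the statement that must be proved, so as written the argument is circular. This is precisely the obstruction that the paper's rescaling to $(1,0)$ and the pointwise comparison $N_{1,0}\le k$ are designed to bypass.
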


The remainder of the paper is structured as follows.
In Section~\ref{sec:rel} we compare the operators $L_T, D_T$ and $K_T$ and prove Lemma~\ref{lea:uniquetc}.
In Section~\ref{sec:wc} we analyze the weak coupling asymptotics of the Birman-Schwinger operators corresponding to $D_T-\lambda V$ and $L_T- \lambda V$.
These asymptotics are then used to prove Theorems~\ref{thm:tu-weak-coupling} and \ref{thm:rel-tu-difference} in Sections~\ref{sec:pf2} and \ref{sec:pf3}, respectively.
The proof of Theorem~\ref{thm:rel-tu-difference-strongc} is discussed in Section~\ref{sec:pf4}.

\section{Relation of $L_T$ and $D_T$ to $K_T$}\label{sec:rel}
Assume that $\Omega=\BR^d$ in this section.
We claimed in the introduction that $L_T-V$ and $D_T-V$ restricted to zero total momentum agree with $K_T-V$.
The operator $K_T$ acts as multiplication with
\[
K_T(p)=\frac{p^2-\mu}{\tanh\Big( \frac{p^2-\mu}{2T}\Big)}
\]
in momentum space.
Let $U$ be the unitary operator on $L^2(\BR^d\times \BR^d)$ switching to relative and center of mass coordinates $r=x-y$, $z=x+y$,
\begin{equation}\label{def:U}
 U\psi(r,z)=\frac{1}{2^{d/2}} \psi((r+z)/2,(z-r)/2).
\end{equation}
The operator $U (D_T-V) U^\dagger$ acts as
\[
U (D_T-V) U^\dagger \psi (r,z)= \int_{\BR^{4d}}  \frac{e^{i p \cdot (r-r') + i q \cdot( z-z')}}{(2\pi)^{2d}} N_T(p,q)^{-1}  \psi(r',z') \dd r' \dd z' \dd p \dd q  - V(r) \psi(r,z),
\]
where
\begin{equation}\label{NT}
N_T(p,q)=2 \left(\frac{(p+q)^2-\mu}{\tanh\left(\frac{(p+q)^2-\mu}{2T}\right)}+\frac{(p-q)^2-\mu}{\tanh\left(\frac{(p-q)^2-\mu}{2T}\right)}\right)^{-1}.
\end{equation}
Similarly, for $U (L_T-V) U^\dagger$ replace $N_T$ by $B_T$ in the equation above, where $B_T$ is defined as 
\begin{equation}\label{BT}
B_T(p,q)=\frac{1}{2} \frac{\tanh\left(\frac{(p+q)^2-\mu}{2T}\right)+\tanh\left(\frac{(p-q)^2-\mu}{2T}\right)}{p^2+q^2-\mu}.
\end{equation}
Note that $B_T(p,q)\leq N_T(p,q)$ and thus $L_T\geq D_T$, see \cite[Lemma 2.7]{roos_bcs_2023}. 
This inequality is slightly stronger than a similar inequality for Birman-Schwinger operators used in \cite[Proposition 23]{frank_bcs_2019} and \cite[Proposition 17]{cadamuro_bcs_2018}.
Both operators $U (D_T-V) U^\dagger$ and $U (L_T-V) U^\dagger$ are translation invariant in $z$-direction and we can restrict them to a fixed total momentum $q$.
Since $N_T(p,0)^{-1}=B_T(p,0)^{-1}=K_T(p)$, indeed $K_T-V$ is the restriction of $L_T-V$ and $D_T-V$ to zero total momentum.

\subsection{Proof of Lemma~\ref{lea:uniquetc}}\label{sec:pf1}
We are now ready to prove Lemma~\ref{lea:uniquetc}.
The proof is an adaptation of the arguments in \cite[Section F]{hainzl_bardeencooperschrieffer_2016} and \cite[Lemma 2.4]{roos_bcs_2023}.
\begin{proof}[Proof of Lemma~\ref{lea:uniquetc}]
We want to argue that $\inf \sigma_s (L_{T_c}-V)=\inf \sigma_s(D_{T_c}-V)=0$.
For any $\alpha\in H^1(\BR^d\times \BR^d)$ with $\alpha(x,y)=\alpha(y,x)$, let $\tilde \alpha(r,y):= \alpha(r+y,y)$.
Since $\alpha$ is symmetric, we may write
\begin{equation}
\langle \alpha, (D_T-V) \alpha \rangle
=\int_{\BR^d} \langle  \alpha(\cdot,y),  K_T  \alpha(\cdot, y )\rangle \dd y
-\int_{\BR^{2d}} V(x-y)\vert  \alpha(x,y)\vert^2 \dd x \dd y\\
\end{equation}
Using the translation invariance of $K_T$ in the first term and substituting $r=x-y$ in the second term, we obtain
\begin{equation}
\langle \alpha, (D_T-V) \alpha \rangle
=\int_{\BR^d} \langle \tilde \alpha(\cdot,y),  (K_T-V )\tilde \alpha(\cdot, y )\rangle \dd y
\end{equation}
Note that $\tilde \alpha(r,y)$ is not necessarily even in $r$.
Therefore, $\inf \sigma_s(D_T-V)\geq \inf \sigma (K_T-V)$ without the restriction to symmetric states.
At the same time,  $\inf \sigma_{s}( K_T-V )\geq \inf \sigma_s(L_T-V)$ since $K_T-V$ equals the restriction of $L_T-V$ to zero total momentum.
In total, we have the chain of inequalities
\[
\inf \sigma_{s} (K_{T}-V)\geq \inf \sigma_{s} (L_{T}-V)\geq \inf \sigma_s (D_{T}-V) \geq \inf \sigma (K_T-V).
\]
By assumption, at temperature $T_c$ the left and the right hand side equal zero, thus all of the terms equal zero.
Due to strict monotonicity of $D_T$ in $T$, $T_u=T_c$.
Furthermore, $T_l=T_c$ provided that for $T<T_c$  we have $\inf \sigma_{s} (L_{T}-V)<0$.
This follows from strict monotonicity of $K_T$ in $T$.
\end{proof}

\section{Weak coupling asymptotics}\label{sec:wc}
In this section we introduce the notation needed to study the weak coupling asymptotics.
Furthermore, we prove some results in dimension one, which will be useful for proving both Theorem~\ref{thm:tu-weak-coupling} and \ref{thm:rel-tu-difference}.

To study the weak coupling asymptotics, we apply the Birman-Schwinger principle, similarly to \cite{hainzl_bardeencooperschrieffer_2016, henheik_universality_2023,roos_bcs_2023}.
Recall the definitions of $U$, $N_T$, and $B_T$ from \eqref{def:U}, \eqref{NT}, and \eqref{BT}, respectively.
For $q\in \BR^d$ we shall denote by $N_T(\cdot ,q),B_T(\cdot, q)$ the operators acting on $L^2(\BR^d)$ through multiplication by $N_T(p,q),B_T(p,q)$ in momentum space.
Consider the operators
\begin{equation}
V^{1/2} N_T (\cdot , q) |V|^{1/2} \quad \mathrm{and} \quad V^{1/2} B_T (\cdot , q)  |V|^{1/2}
\end{equation}
on $L^2(\BR^d)$, where $V^{1/2}(r)= \sgn (V(r)) |V(r)|^{1/2}$.
They are precisely the Birman-Schwinger operators corresponding to $U (D_T-\lambda V) U^\dagger$ and $U (L_T-\lambda V) U^\dagger$ at total momentum $q$, respectively.
With the convention that $\sgn (0)=0$, the spectra of the Birman-Schwinger operators satisfy
\begin{equation}\label{eq:BS}
\sgn \inf \sigma  (D_T-\lambda V)=-\sgn \sup_q \sup \sigma  (V^{1/2} N_T (\cdot , q) |V|^{1/2} - \lambda^{-1}\BI),
\end{equation}
where $\BI$ denotes the identity operator.
Similarly for $L_T$ and $B_T$.
We can also restrict to symmetric states on both sides.

Since the Birman-Schwinger operators are bounded for $T$ away from zero, the temperatures $T_u$ and $T_l$ must go to zero in the weak coupling limit.
Therefore, we need to understand the behavior of the Birman-Schwinger operators at low temperatures.
For the operators $V^{1/2} K_T^{-1}  |V|^{1/2}$ and $V^{1/2} B_T (\cdot , q)  |V|^{1/2}$ the low temperature asymptotics are discussed in \cite{hainzl_bardeencooperschrieffer_2016, henheik_universality_2023} and \cite[Proof of Lemma 6.3]{roos_bcs_2023}, respectively.
In both cases, the Birman-Schwinger operators have a singular part, which grows logarithmically in $T$.
The singular part is supported on the Fermi sphere $p^2=\mu$.
Additionally, for $V^{1/2} B_T (\cdot , q)  |V|^{1/2}$ to diverge, the total momentum $q$ must vanish with $T$.

In this paper, we consider the low temperature asymptotics of $V^{1/2} N_T (\cdot , q) |V|^{1/2}$.
We find that in dimensions $2,3$, the operator behaves similarly to $V^{1/2} B_T (\cdot , q)  |V|^{1/2}$, see Section~\ref{sec:pf3}.
In dimension $1$ however, there is an additional divergent contribution from relative momentum $p=0$ and total momentum $q=\mu$.
We shall make this statement precise in the following.

Let $\FT_\mu:L^1(\BR)\to L^2(\{-1,1\})$ be given by $\FT_\mu \psi (p)=(2\pi)^{-1/2} \int_\BR e^{-ip  \sqrt{\mu} x} \psi(x) \dd x$ and let $\FT_0:L^1(\BR)\to \BC$ denote $\FT_0 \psi =(2\pi)^{-1/2} \int_\BR \psi(x) \dd x$.
Define the functions
\begin{equation}\label{eq:def-mn}
m_T(q):=\int_{0}^{\sqrt{3\mu}} B_T(p,q) \dd p \quad \mathrm{and} \quad n_T(q):=\int_{0}^{\sqrt{3\mu}} N_T(p,q) \dd p.
\end{equation}
Note that $m_T(q)\leq n_T(q)$.
Define the families of operators $Q_T(q),W_T(q):L^1(\BR)\to L^\infty(\BR)$ for $q\in \BR$ through
\begin{equation}
Q_T(q) = m_T(q) \FT_\mu^\dagger \FT_\mu \chi_{\max \{T/\mu, |q|/\sqrt{\mu}\}<1/2}
\end{equation}
and
\begin{equation}
W_T(q) = n_T(q) \Big(\FT_\mu^\dagger \FT_\mu \chi_{\max \{T/\mu, |q|/\sqrt{\mu}\}<1/2} +2 \FT_0^\dagger \FT_0 \chi_{\max \{T/\mu, ||q|-\sqrt{\mu}|/\sqrt{\mu}\}<1/2} \Big),
\end{equation}
where $\chi$ denotes the characteristic function.
The operators $Q_T(q)$ are nonzero when $q$ is smaller than $\sqrt{\mu}/2$, and they equal $\FT_\mu^\dagger \FT_\mu$ up to a factor.
The operators $W_T(q)$ have an additional term $\FT_0^\dagger \FT_0$ when $q$ is close to the Fermi sphere.
The following Lemma states, that the operators $Q_T(q)$ and $W_T(q)$ capture the divergence of $B_T(\cdot, q)$ and $ N_T(\cdot, q)$, respectively.
\begin{lemma}\label{lea:1d-approx}
Let  $\mu>0$.
Let $V$ satisfy $(1+ |\cdot|^2) V\in L^1(\BR)$, $V(r)=V(-r)$.
Then
\begin{equation}\label{eq:bt-approx}
\sup_{T>0} \sup_{q \in \BR} \lVert V^{1/2} B_T(\cdot, q) |V|^{1/2} - V^{1/2} Q_T(q) |V|^{1/2} \rVert <\infty
\end{equation}
and
\begin{equation}\label{eq:nt-approx}
\sup_{T>0} \sup_{q \in \BR} \lVert V^{1/2} N_T(\cdot, q) |V|^{1/2} - V^{1/2} W_T(q) |V|^{1/2} \rVert <\infty.
\end{equation}
\end{lemma}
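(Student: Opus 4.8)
\emph{Proof strategy.}
The plan is to isolate, by hand, the part of $N_T(\cdot,q)$ (resp.\ $B_T(\cdot,q)$) responsible for the divergence as $T\to0$, to show that after conjugation with $V^{1/2}$ and $|V|^{1/2}$ this singular part agrees with $V^{1/2}W_T(q)|V|^{1/2}$ (resp.\ $V^{1/2}Q_T(q)|V|^{1/2}$) up to an error bounded uniformly in $T,q$, and to check that the remainder has $p$-integral bounded uniformly in $T,q$, hence uniformly bounded Hilbert--Schmidt norm once conjugated. I describe the argument for \eqref{eq:nt-approx}; \eqref{eq:bt-approx} is parallel, with one extra point noted at the end. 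Two elementary inputs are used throughout. First, for $f\in L^1(\BR)$ the operator $V^{1/2}f(-i\nabla)|V|^{1/2}$ has kernel $V^{1/2}(x)\,\check f(x-y)\,|V|^{1/2}(y)$ with $\|\check f\|_\infty\le(2\pi)^{-1}\|f\|_{L^1}$, so
\[
\big\|V^{1/2}f(-i\nabla)|V|^{1/2}\big\|_{\mathrm{HS}}\le (2\pi)^{-1}\|f\|_{L^1(\BR)}\|V\|_{L^1(\BR)}.
\]
Second, a ``first moment'' bound: if $g\ge0$ is supported in $\{|p-p_0|<\varepsilon_0\}$ and $P_{p_0}$ denotes the operator with kernel $(2\pi)^{-1}e^{ip_0(x-y)}$, then $\check g(z)-(2\pi)^{-1}\big(\int g\big)e^{ip_0z}=(2\pi)^{-1}e^{ip_0z}\int g(p)\big(e^{i(p-p_0)z}-1\big)\dd p$, so using $|e^{i\theta}-1|\le|\theta|$ and $|x-y|^2\le2x^2+2y^2$,
\[
\Big\|V^{1/2}g(-i\nabla)|V|^{1/2}-\big(\textstyle\int g\big)V^{1/2}P_{p_0}|V|^{1/2}\Big\|\le C\Big(\int g(p)|p-p_0|\dd p\Big)\|V\|_{L^1}^{1/2}\big\|(1+|\cdot|^2)V\big\|_{L^1}^{1/2},
\]
which is where the hypothesis $(1+|\cdot|^2)V\in L^1(\BR)$ enters. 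Since $\FT_\mu^\dagger\FT_\mu$ has kernel $(2\pi)^{-1}(e^{i\sqrt\mu(x-y)}+e^{-i\sqrt\mu(x-y)})$, i.e.\ coincides with $P_{\sqrt\mu}+P_{-\sqrt\mu}$, and $\FT_0^\dagger\FT_0=P_0$ (both bounded after conjugation by $V^{1/2},|V|^{1/2}$), this reduces matters to comparing the singular part of $N_T(\cdot,q)$ with the appropriate multiple of $P_{\pm\sqrt\mu}$ or of $P_0$.

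Next I localise the singularity. Since $t/\tanh(t/2T)\ge\max\{|t|,2T\}$, one has $N_T(p,q)\le 2\big(\max\{|\xi_+|,2T\}+\max\{|\xi_-|,2T\}\big)^{-1}$ with $\xi_\pm:=(p\pm q)^2-\mu$, so $N_T(p,q)$ can be large only near points where $\xi_+$ and $\xi_-$ vanish simultaneously, i.e.\ only for $q\in\{0,\pm\sqrt\mu\}$, at $p=\pm\sqrt\mu$ when $q\approx0$ and at $p=0$ when $q\approx\pm\sqrt\mu$. Fix small $\varepsilon_0>0$ and a smaller $\delta_0>0$. If $2T\ge\mu$ then $N_T(p,q)\le 2/\max\{\mu,|\xi_+|\}$, whose $p$-integral is bounded independently of $q$ by translation invariance in $p$; as $W_T(q)=0$ here, the first input closes this case. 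If $2T<\mu$ and $\operatorname{dist}(q,\{0,\pm\sqrt\mu\})\ge\delta_0$, then splitting the $p$-line into fixed neighbourhoods of the at most four roots of $\xi_+\xi_-=0$ (on each of which only one factor vanishes, so $N_T$ is bounded there) plus the region where $N_T\le C/p^2$ gives $\|N_T(\cdot,q)\|_{L^1(\BR)}\le C(\delta_0)$, while $n_T(q)\le C(\delta_0)$; the first input again suffices. It remains to treat $2T<\mu$ with $q$ within $\delta_0$ of $0$ or, using $N_T(p,-q)=N_T(p,q)$ and $W_T(-q)=W_T(q)$, of $\sqrt\mu$.

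For $|q|<\delta_0$ I write $N_T(\cdot,q)=g_T^++g_T^-+r_T$ with $g_T^\pm=N_T(\cdot,q)\chi_{|p\mp\sqrt\mu|<\varepsilon_0}$, so that $\|r_T\|_{L^1(\BR)}\le C$ by the localisation above. Factoring $\xi_+=\big((p-\sqrt\mu)+q\big)\big((p-\sqrt\mu)+q+2\sqrt\mu\big)$ gives $|\xi_+|\ge\sqrt\mu\,|(p-\sqrt\mu)+q|$ on $\operatorname{supp}g_T^+$ (for $\varepsilon_0$ small), and likewise $|\xi_-|\ge\sqrt\mu\,|(p-\sqrt\mu)-q|$, whence $N_T(p,q)\le\big(\sqrt\mu\max\{|p-\sqrt\mu|,|q|\}\big)^{-1}\le\big(\sqrt\mu\,|p-\sqrt\mu|\big)^{-1}$ there; therefore $\int g_T^+(p)|p-\sqrt\mu|\dd p\le 2\varepsilon_0/\sqrt\mu$ uniformly in $T,q$. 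By the first moment bound, $V^{1/2}g_T^+(-i\nabla)|V|^{1/2}=\big(\int g_T^+\big)V^{1/2}P_{\sqrt\mu}|V|^{1/2}+O(1)$, and symmetrically for $g_T^-$, with $\int g_T^-=\int g_T^+$ by $N_T(-p,q)=N_T(p,q)$; moreover $\int g_T^+ - n_T(q)=O(1)$ since the part of $[0,\sqrt{3\mu}]$ outside $(\sqrt\mu-\varepsilon_0,\sqrt\mu+\varepsilon_0)$ carries no singularity. As $W_T(q)=n_T(q)\FT_\mu^\dagger\FT_\mu=n_T(q)(P_{\sqrt\mu}+P_{-\sqrt\mu})$ in this regime, summing gives \eqref{eq:nt-approx}. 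The case $|q-\sqrt\mu|<\delta_0$ is identical with $p=\pm\sqrt\mu$ replaced by $p=0$: the singular part is $g_T^0=N_T(\cdot,q)\chi_{|p|<\varepsilon_0}$ with $\int g_T^0|p|\dd p\le 2\varepsilon_0/\sqrt\mu$ by an analogous factoring, $\int g_T^0=2\int_0^{\varepsilon_0}N_T(p,q)\dd p=2n_T(q)+O(1)$ by evenness, and $W_T(q)=2n_T(q)P_0$ in this regime.

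For \eqref{eq:bt-approx} the same scheme applies with $B_T\le N_T$ in place of $N_T$ (the singular parts of $B_T$ and $N_T$ near $p=\pm\sqrt\mu$ satisfy the same moment bound), except near $q=\pm\sqrt\mu$, $p=0$: there $Q_T(q)=0$, so one needs the separate uniform bound $\|B_T(\cdot,q)\|_{L^1(\BR)}\le C$. This I would obtain from the identity $\tanh A+\tanh B=\sinh(A+B)/(\cosh A\cosh B)$ together with $\cosh A\cosh B\ge\tfrac12\cosh(A-B)$, distinguishing whether $\xi_+\xi_-\ge0$ or $<0$: with $A=\xi_+/2T$, $B=\xi_-/2T$ one has $A-B=2pq/T$, which near $p=0$, $q\approx\sqrt\mu$ produces a $\operatorname{sech}^2$-type decay in $p$ with uniformly bounded integral. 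I expect this last family of uniform pointwise bounds on $B_T$ (together with the uniform local boundedness of $N_T$ away from the coincidence set) to be the main obstacle --- in particular, tracking that the singularity of $\xi_+$ sits at $p=\sqrt\mu-q$ rather than at $p=\sqrt\mu$, and keeping uniformity as $q$ and $T$ tend to $0$ independently --- whereas the extraction of the singular rank-one/rank-two part via the first moment bound is routine once the two inputs above are in place.
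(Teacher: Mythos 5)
Your argument is correct and rests on the same mechanism as the paper's proof: away from the singular set the operator norm is controlled by the Schwarz inequality through $\lVert V\rVert_1$ times a uniform bound on $\int N_T(p,q)\,\dd p$ (resp.\ of $B_T$), while near $q\approx0$ and $|q|\approx\sqrt\mu$ the difference kernel is estimated via $|e^{i\theta}-1|\le|\theta|$, first moments of the symbol, and the hypothesis $(1+|\cdot|^2)V\in L^1$, yielding a uniform Hilbert--Schmidt bound. The differences are mainly organizational, and they do buy a little: you localize the singular part to $\varepsilon_0$-windows around $p=\pm\sqrt\mu$ (resp.\ $p=0$) and absorb the $O(1)$ discrepancy between $\int g_T^{\pm}$ and $n_T(q)$ into the error, whereas the paper compares on all of $[-\sqrt{3\mu},\sqrt{3\mu}]$ so that the coefficient is exactly $n_T(q)$ or $m_T(q)$; correspondingly, your pointwise bound $N_T(p,q)\le\bigl(\sqrt\mu\,\max\{|p\mp\sqrt\mu|,|q|\}\bigr)^{-1}$, obtained from the factorization and $|a+b|+|a-b|=2\max\{|a|,|b|\}$, replaces the explicit moment integrals of Lemma~\ref{lea:1d-approx-1} and is arguably cleaner on the localized support. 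The one substantive divergence is the point you flag yourself: for \eqref{eq:bt-approx} with $|q|$ near $\sqrt\mu$ one needs the uniform integrability $\sup_{T>0,\,|q|\ge\sqrt\mu/2}\int_\BR B_T(p,q)\,\dd p<\infty$, which the paper simply imports from \cite[Lemma 4.4]{hainzl_boundary_2022}, while you sketch a self-contained proof via $\tanh A+\tanh B=\sinh(A+B)/(\cosh A\cosh B)$ and $\cosh A\cosh B\ge\tfrac12\cosh(A-B)$. That identity does deliver the bound (treat $|p|\lesssim T/\sqrt\mu$ with $B_T\le 1/2T$ and larger $|p|$ with the exponential decay supplied by $\cosh(2pq/T)$ in the denominator, keeping uniformity as $q$ ranges over a neighborhood of $\sqrt\mu$), and it is essentially how the cited lemma is proved, so this last step is genuine extra work in your write-up but not a gap in the approach.
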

For the operator $B_T$ an analogous result in dimensions 2 and 3 has been proven in \cite[Proof of Lemma 6.3]{roos_bcs_2023}.
The asymptotics of $Q_T(q)$ and $W_T(q)$ are determined by the behaviour of $m_T(q)$ and $n_T(q)$.
The following Lemma states that $m_T(q)$ diverges logarithmically for $q=0$, whereas $n_T(q)$ diverges for $|q|\in \{0,\sqrt{\mu}\}$.
\begin{lemma}\label{lea:1d-w-asy}
Let  $\mu>0$.
Then there is a constant $C$ independent of $T$ and $q$, such that
\begin{equation}\label{eq:mt-up-bd}
m_T(q)  \leq  \mu^{-1/2}\ln\left(\frac{1}{T/\mu + |q|/\sqrt{\mu}}\right)\chi_{\max\{T/\mu,|q|/\sqrt{\mu}\}\leq1/2} +C
\end{equation}
and
\begin{multline}\label{eq:nt-up-bd}
n_T(q)  \leq  \mu^{-1/2}\ln\left(\frac{1}{T/\mu + |q|/\sqrt{\mu}}\right)\chi_{\max\{T/\mu,|q|/\sqrt{\mu}\}\leq1/2} \\
+  \frac{\mu^{-1/2}}{2}\ln\left(\frac{1}{T/\mu +| |q|-\sqrt{\mu}|/\sqrt{\mu}}\right)\chi_{\max\{T/\mu,| |q|-\sqrt{\mu}|/\sqrt{\mu}\}\leq1/2} +C.
\end{multline}
Furthermore, the following lower bounds hold as $T\to 0$:
\[
m_T(0)=n_T(0)\geq  \mu^{-1/2}\ln(\mu/T)+O(1)\quad \textrm{and}\quad n_T(\sqrt{\mu})\geq \frac{1}{2}\mu^{-1/2} \ln(\mu/T)+O(1).
\]
\end{lemma}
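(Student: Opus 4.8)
The plan is to reduce everything to one–dimensional integrals of the single–variable kernel $g(x):=\tanh(x/2T)/x$ evaluated at $a:=(p+q)^2-\mu$ and $b:=(p-q)^2-\mu$, and then localise around the points where the integrand is singular. Here $g$ is even, positive, strictly decreasing on $[0,\infty)$ and satisfies $g(x)\le\min\{1/|x|,1/2T\}$; moreover $N_T(p,q)=2g(a)g(b)/(g(a)+g(b))$ is the harmonic mean of $g(|a|),g(|b|)$, while $B_T(p,q)=\big(\tanh(a/2T)+\tanh(b/2T)\big)/(a+b)$ is a positive weighted average of $g(|a|),g(|b|)$ when $a,b$ have the same sign. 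I would first record the elementary inputs. (i) $x\mapsto x\coth(x/2T)=1/g(x)$ is convex on $(0,\infty)$, since its second derivative equals $2\beta\operatorname{csch}^2(\beta x)\big(\beta x\coth(\beta x)-1\big)\ge0$ with $\beta=1/2T$; hence $1/N_T=\tfrac12\big(1/g(|a|)+1/g(|b|)\big)\ge1/g\!\big(\tfrac{|a|+|b|}2\big)$, i.e. $N_T(p,q)\le g\!\big(\tfrac{|a|+|b|}2\big)$, and since $|a|+|b|=\max\{|a+b|,|a-b|\}=\max\{2|p^2+q^2-\mu|,4p|q|\}$ and $g$ is decreasing, $N_T(p,q)\le g(p^2+q^2-\mu)$ and $N_T(p,q)\le g(2p|q|)$. (ii) For $a,b$ of opposite sign, $|\tanh(a/2T)+\tanh(b/2T)|=|\tanh(|a|/2T)-\tanh(|b|/2T)|\le\operatorname{sech}^2\!\big(\tfrac{\min\{|a|,|b|\}}{2T}\big)\tfrac{|a+b|}{2T}$, so $B_T(p,q)\le\tfrac1{2T}\operatorname{sech}^2\!\big(\tfrac{\min\{|a|,|b|\}}{2T}\big)$; for $a,b$ of the same sign $B_T(p,q)\le\tfrac2{|a|+|b|}\le\tfrac2{\max\{|a|,|b|\}}$; and always $B_T\le N_T$. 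By evenness in $q$ I assume $q\ge0$.

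For the upper bounds, $g(p^2+q^2-\mu)$ and $g(2pq)$ are both $O(1/\mu)$ unless $|p^2+q^2-\mu|$ and $pq$ are simultaneously small, which for $p\in(0,\sqrt{3\mu})$ happens only near $p=\sqrt{\mu-q^2}$ (and then only if $q$ is small) or near $p=0$ (and then only if $q$ is near $\sqrt\mu$); off these two neighbourhoods the integrand is $O(1/\mu)$ and contributes $O(\mu^{-1/2})$. Near $p=\sqrt{\mu-q^2}$ I substitute $u=p^2+q^2-\mu$, so $dp=du/\big(2\sqrt{u+\mu-q^2}\,\big)$, use $N_T\le g(u)$, and invoke $N_T\le g(2pq)$ — since $2pq\gtrsim q\sqrt{\mu-q^2}$ there — to confine the logarithm to $|u|\gtrsim q\sqrt{\mu-q^2}$. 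Writing $\tfrac1{2\sqrt{u+\mu-q^2}}=\tfrac1{2\sqrt\mu}+(\text{error})$, the error integrates against $g(u)$ to $O(1)$ because $|u|g(u)=|\tanh(u/2T)|\le1$ and $\tfrac{q^2}\mu\ln\tfrac1{T/\mu+q/\sqrt\mu}$ stays bounded for $q\le\sqrt\mu/2$, leaving $\tfrac1{2\sqrt\mu}\int_{|u|\gtrsim q\sqrt{\mu-q^2}}g(u)\,du=\mu^{-1/2}\ln\tfrac1{T/\mu+q/\sqrt\mu}+O(1)$. For $n_T$ I run the identical computation near $p=0$; the only change is that the integration is over the half-line $p>0$ only, which halves the logarithmic coefficient and gives $\tfrac12\mu^{-1/2}\ln\tfrac1{T/\mu+||q|-\sqrt\mu|/\sqrt\mu}+O(1)$, with $|q^2-\mu|\asymp2\sqrt\mu\,||q|-\sqrt\mu|$ serving as the cutoff. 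For $m_T$, near $p=0$ the same-sign part of $B_T$ is bounded by $\tfrac2{\max\{|a|,|b|\}}\lesssim 1/|q^2-\mu|$ on a $p$-interval of length $\lesssim|q^2-\mu|/\sqrt\mu$, hence contributes $O(\mu^{-1/2})$, and the opposite-sign part is $\le\tfrac1{2T}\operatorname{sech}^2\!\big(\tfrac{2p\sqrt\mu-|q^2-\mu|}{2T}\big)$, whose $p$-integral is $O(\mu^{-1/2})$; so $m_T$ acquires no logarithm at $|q|=\sqrt\mu$, and at $q=0$ it agrees with $n_T(0)$ since $B_T(p,0)=N_T(p,0)=g(p^2-\mu)$.

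For the lower bounds I keep only the singular neighbourhood. At $q=0$, $m_T(0)=n_T(0)=\int_{-\mu}^{2\mu}g(u)\tfrac{du}{2\sqrt{u+\mu}}\ge\int_{-\mu}^{\mu}g(u)\tfrac{du}{2\sqrt{u+\mu}}$; using $\tfrac1{2\sqrt{u+\mu}}\ge\tfrac1{2\sqrt\mu}$ for $u\le0$ and $\tfrac1{2\sqrt{u+\mu}}\ge\tfrac1{2\sqrt\mu}(1-\tfrac u{2\mu})$ for $u\in[0,\mu]$, together with $\int_0^\mu ug(u)\,du\le\mu$ and $\int_0^\mu g(u)\,du=\int_0^\mu\tfrac{\tanh(u/2T)}u\,du\ge\ln(\mu/T)+O(1)$ (the last from $\tanh(u/2T)\ge1-2e^{-u/T}$ on $[2T,\mu]$), gives $m_T(0)\ge\mu^{-1/2}\ln(\mu/T)+O(1)$. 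At $q=\sqrt\mu$, since the harmonic mean dominates the minimum, $N_T(p,\sqrt\mu)\ge\min\{g(|a|),g(|b|)\}=g\big(\max\{|a|,|b|\}\big)=g(p^2+2p\sqrt\mu)$; substituting $u=p^2+2p\sqrt\mu$ with $dp=du/\big(2\sqrt{u+\mu}\,\big)$, restricting to $u\in(0,\mu)$ and again using $\tfrac1{2\sqrt{u+\mu}}\ge\tfrac1{2\sqrt\mu}(1-\tfrac u{2\mu})$ yields $n_T(\sqrt\mu)\ge\tfrac12\mu^{-1/2}\ln(\mu/T)+O(1)$.

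The main obstacle is pinning the logarithmic prefactors to exactly $\mu^{-1/2}$ and $\tfrac12\mu^{-1/2}$ rather than to $(1+o(1))\mu^{-1/2}$: a crude bound on the Jacobian $1/(2\sqrt{u+\mu-q^2})$ at the turning point inflates the coefficient, so one must Taylor-expand it and check that the correction is integrable against $g$ (which is where $|u|g(u)\le1$ and the boundedness of $\tfrac{q^2}\mu\ln\tfrac1{T/\mu+q/\sqrt\mu}$ on $q\le\sqrt\mu/2$ enter). The second delicate point is that $m_T$, unlike $n_T$, has no logarithm at $|q|=\sqrt\mu$; this is entirely due to the cancellation $\tanh(a/2T)+\tanh(b/2T)\approx0$ when $a\approx-b$, which has no analogue in the harmonic mean defining $N_T$ and must be quantified through the Lipschitz-plus-saturation estimate (ii). Everything else is bookkeeping: splitting $(0,\sqrt{3\mu})$ into the two singular neighbourhoods and a remainder, and checking that $m_T(q),n_T(q)=O(\mu^{-1/2})$ on the remainder and whenever $\max\{T/\mu,|q|/\sqrt\mu\}>1/2$ (where already $N_T\le\min\{g(p^2+q^2-\mu),1/2T\}$ is $O(1/\mu)$ or $\le1/2\mu$, so the $\chi$'s vanish).
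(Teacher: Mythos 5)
Your argument is correct and, at its core, follows the same route as the paper: the pointwise bounds you derive from convexity of $x\coth(x/2T)$ (namely $N_T(p,q)\le\min\{1/2T,\,2/(|a|+|b|)\}$ with $a=(p+q)^2-\mu$, $b=(p-q)^2-\mu$) are exactly the bound $N_T\le\min\{1/2T,M\}$ of \eqref{eq:Mdef} that the paper obtains from monotonicity in $T$; the localization around the two singular momenta $p\approx\sqrt{\mu-q^2}$ for small $|q|$ and $p\approx0$ for $|q|\approx\sqrt{\mu}$, and the extraction of the logarithms with coefficients $\mu^{-1/2}$ and $\tfrac12\mu^{-1/2}$ while controlling the Jacobian error, mirror the explicit splittings in \eqref{eq:nt-bound-1}--\eqref{eq:nt-bound-2}. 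Where you genuinely differ is that you prove in-house what the paper imports: the uniform boundedness of $m_T(q)$ for $|q|\ge\sqrt{\mu}/2$ (paper: \cite[Lemma 4.4]{hainzl_boundary_2022}), which you get from the cancellation $\tanh(a/2T)+\tanh(b/2T)\approx0$ quantified by your $\tfrac1{2T}\cosh^{-2}$ estimate, and the two lower bounds (paper: \cite[Lemma 3.5]{hainzl_boundary_2022} and \cite[Lemma 1]{hainzl_critical_2008}), which you obtain from $\tanh x\ge1-2e^{-2x}$ and an elementary expansion of the Jacobian; this makes the lemma self-contained at the price of a few extra elementary estimates. Two small imprecisions, neither fatal: near $p=0$, $|q|\approx\sqrt{\mu}$, the bound $N_T\le g(|p^2+q^2-\mu|)$ alone is far too weak (it would give $T^{-1/2}$), so the "identical computation" there must take the logarithm from $g(2pq)$ and use $|q^2-\mu|$ (and $T$) only as the cutoff scale — your cutoff remark indicates this, but it should be stated explicitly; and your closing claim that $n_T(q)=O(\mu^{-1/2})$ whenever $\max\{T/\mu,|q|/\sqrt{\mu}\}>1/2$ is false at $|q|=\sqrt{\mu}$, $T\to0$ — there only the first characteristic function in \eqref{eq:nt-up-bd} vanishes and the second logarithmic term, which your main estimate already supplies, is exactly what is needed, so only that bookkeeping sentence needs correcting.
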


In the proofs, we will frequently apply the following bounds.
Using that $N_T$ is monotone decreasing in $T$, observe that
\begin{equation}\label{eq:Mdef}
N_T(p,q)\leq \frac{2}{|(p-q)^2-\mu|+|(p+q)^2-\mu|}=:M(p,q)
\end{equation}
Furthermore, the triangle inequality implies
\begin{equation}
M(p,q)\leq \frac{1}{|p^2+q^2-\mu|}
\end{equation}
Together with the bound $N_T(p,q)\leq 1/2T$ it follows that for all $\epsilon>0$ and $\mu$ there is a constant $C(\mu,\epsilon)>0$ such that for all $T,p,q$
\begin{equation}\label{nt-bound-2}
N_T(p,q) \leq \frac{\chi_{p^2+q^2< \mu+\epsilon}}{2T} + \frac{C(\mu,\epsilon)}{1+p^2+q^2}\chi_{p^2+q^2\geq \mu+\epsilon}.
\end{equation}

\subsection{Proof of Lemma~\ref{lea:1d-approx}}
\begin{proof}[Proof of Lemma~\ref{lea:1d-approx}]
We start by proving \eqref{eq:nt-approx}, i.e.~that the divergence of $N_T$ is captured by $W_T$.
Consider the case that $T\geq \mu/2$ or $|q|\geq \frac{3}{2} \sqrt{\mu}$.
Then $W_T(q)=0$.
The Schwarz inequality implies that for $\psi\in L^2(\BR)$ we have $\lVert  \widehat{V^{1/2}\psi} \rVert_\infty \leq (2\pi)^{-1/2} \lVert  {V^{1/2}\psi} \rVert_1\leq  (2\pi)^{-1/2}\lVert V\rVert_1^{1/2} \lVert \psi \rVert_2$.
Therefore,
\begin{equation}\label{eq:pf-lea31-2}
\lVert V^{1/2} N_T(\cdot, q) |V|^{1/2} \rVert \leq  (2\pi)^{-1}\lVert V\rVert_1 \int_\BR N_T(p,q) \dd p .
\end{equation}
The right hand side is bounded uniformly in $T$ and $q$ since by \eqref{nt-bound-2}, there is a constant $C$ such that $N_T(p,q)\leq C/(1+p^2)$ for all $T\geq \mu/2$ or $|q|\geq \frac{3}{2} \sqrt{\mu}$.

For the approximation of $N_T$ it remains to consider the case $T< \mu/2$ and $|q|< \frac{3}{2} \sqrt{\mu}$.
We prove the following Lemma below.
\begin{lemma}\label{lea:1d-approx-1}
Let $\mu>0$ and $M$ as in \eqref{eq:Mdef}. Then,
\begin{equation}\label{eq:1d-approx-a}
\sup_{|q|\leq\sqrt{\mu}/2 } \int_{-\sqrt{3\mu}}^{\sqrt{3 \mu}} M(p,q) ||p|-\sqrt{\mu}| \dd p <\infty
\end{equation}
and
\begin{equation}\label{eq:1d-approx-b}
\sup_{|q-\sqrt{\mu}|\leq \sqrt{\mu}/2 } \int_{-\sqrt{3\mu}}^{\sqrt{3 \mu}} M(p,q) |p| \dd p <\infty
\end{equation}
\end{lemma}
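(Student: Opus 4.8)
The plan is to prove both bounds by directly estimating the integrals after decomposing the $p$-range according to the size of the two "Fermi factors" $|(p-q)^2-\mu|$ and $|(p+q)^2-\mu|$ appearing in $M(p,q)=2/(|(p-q)^2-\mu|+|(p+q)^2-\mu|)$. The underlying intuition is that the weight $||p|-\sqrt\mu|$ (resp.\ $|p|$) vanishes precisely where the integrand $M(p,q)$ could blow up, and the two exactly compensate up to a logarithmic factor, which is still integrable.

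For \eqref{eq:1d-approx-a}: fix $|q|\le \sqrt\mu/2$. I would first record the two pointwise bounds $M(p,q)\le 2/|(p-q)^2-\mu|$ and $M(p,q)\le 2/|(p+q)^2-\mu|$, and use the first of these on the region $p\ge 0$ and the second on $p\le 0$ (by symmetry $M(-p,q)=M(p,-q)$ it suffices to treat $p\in[0,\sqrt{3\mu}]$ and bound $\int_0^{\sqrt{3\mu}} \frac{2(|p|-\sqrt\mu\,)}{\,|(p-q)^2-\mu|\,}\,\chi\,\dd p$ uniformly in $|q|\le\sqrt\mu/2$, wait—better: use $M(p,q)\le 2/|(p-q)^2-\mu|$ throughout, then substitute $u=p-q$). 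With $u=p-q$ the integral becomes $\int \frac{||u+q|-\sqrt\mu|}{|u^2-\mu|}\,\dd u$ over a bounded $u$-range; since $|q|\le\sqrt\mu/2$, the numerator is comparable to $||u|-\sqrt\mu|$ up to an additive $O(\sqrt\mu)$ term—no, that additive term is not good enough near $|u|=\sqrt\mu$. Instead: write $||u+q|-\sqrt\mu|\le |u+q-\sqrt\mu|$ on $u+q>0$, and factor $u^2-\mu=(u-\sqrt\mu)(u+\sqrt\mu)$. Near the singularity $u=\sqrt\mu$ one has $|u+\sqrt\mu|\sim 2\sqrt\mu$ bounded below, and $|u+q-\sqrt\mu|/|u-\sqrt\mu|\le 1 + |q|/|u-\sqrt\mu|$, whose integral against $\dd u$ near $u=\sqrt\mu$ contributes $O(|q|\ln(1/|q|))=O(1)$ uniformly; away from the singularity the integrand is bounded. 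The analogous analysis at $u=-\sqrt\mu$ completes the bound, giving a constant independent of $q$.

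For \eqref{eq:1d-approx-b}: fix $|q-\sqrt\mu|\le\sqrt\mu/2$, so $q\in[\sqrt\mu/2,\,3\sqrt\mu/2]$. Now the relevant weight is $|p|$, which vanishes at $p=0$, and at $p=0$ both factors equal $|q^2-\mu|=|q-\sqrt\mu|\,|q+\sqrt\mu|$, which is small when $q$ is near $\sqrt\mu$. So the dangerous region is $p\approx 0$; there one needs $|p|\cdot M(p,q)$ to stay integrable. Use $M(p,q)\le 2/(|(p-q)^2-\mu|+|(p+q)^2-\mu|)$ and note $|(p-q)^2-\mu|+|(p+q)^2-\mu|\ge |(p+q)^2-(p-q)^2| = 4|p|\,|q| \ge 2\sqrt\mu\,|p|$ whenever both terms have the same sign, while if they have opposite signs the sum is $\ge |(p-q)^2-\mu|+|(p+q)^2-\mu| \ge$ (something bounded below)—more carefully, $|(p-q)^2-\mu|+|(p+q)^2-\mu|\ge \max\{|(p-q)^2-\mu|,\,|(p+q)^2-\mu|\}$ and also $\ge 4|pq|$ always (triangle inequality applied as $|a|+|b|\ge|a-b|$ with $a=(p+q)^2-\mu$, $b=(p-q)^2-\mu$, $a-b=4pq$). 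Hence $|p|M(p,q)\le 2|p|/(4|p||q|) = 1/(2|q|)\le 1/\sqrt\mu$ pointwise, which is uniformly bounded; integrating over the bounded range $[-\sqrt{3\mu},\sqrt{3\mu}]$ gives a finite constant independent of $q$. (This last estimate is clean enough that \eqref{eq:1d-approx-b} is actually the easy half.)

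The main obstacle is the uniformity in $q$ in \eqref{eq:1d-approx-a}: one must check that as $q\to 0$ the logarithmic near-singularity contribution $\int_{|u-\sqrt\mu|<\delta} |q|/|u-\sqrt\mu|\,\dd u$ does not degenerate, and that the bound does not blow up—it is controlled because this contributes $O(|q|\log(1/|q|))$, which tends to $0$, but writing this carefully while simultaneously handling both singularities $u=\pm\sqrt\mu$ and the sign of $u+q$ requires a slightly fussy case split. Everything else is a routine bounded-domain integral estimate.
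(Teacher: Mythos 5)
Your argument for \eqref{eq:1d-approx-b} is correct and in fact cleaner than the paper's: the pointwise bound $|(p-q)^2-\mu|+|(p+q)^2-\mu|\ge|4pq|$ gives $|p|\,M(p,q)\le 1/(2|q|)\le 1/\sqrt{\mu}$ uniformly on $|q|\ge\sqrt{\mu}/2$, and a single integration over the bounded interval finishes it. The paper instead splits $[0,\sqrt{3\mu}]$ into the three regions determined by the signs of $(p\pm q)^2-\mu$ and evaluates each piece; your one-line estimate is a genuine simplification.

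Your argument for \eqref{eq:1d-approx-a}, however, has a real gap. You propose to bound $M(p,q)\le 2/|(p-q)^2-\mu|$ throughout, substitute $u=p-q$, and estimate $\int 2|u+q-\sqrt{\mu}|/|u^2-\mu|\,\dd u$. The problem is that this upper bound is already infinite for any $q\ne 0$: near $u=\sqrt{\mu}$ (i.e.\ $p=\sqrt{\mu}+q$, which lies inside the $p$-range), the numerator $|u+q-\sqrt{\mu}|$ tends to $|q|>0$ while $|u^2-\mu|\approx 2\sqrt{\mu}\,|u-\sqrt{\mu}|$, so the integrand behaves like $|q|/(\sqrt{\mu}\,|u-\sqrt{\mu}|)$, which is not integrable across $u=\sqrt{\mu}$. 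Your claim that this region contributes $O(|q|\ln(1/|q|))$ would only follow if you excised $|u-\sqrt{\mu}|<|q|$, but that sub-region is part of the domain, and on it the bound $M\le 2/|(p-q)^2-\mu|$ is unusable. The fix is forced: near $p=\sqrt{\mu}+q$ you must use the other Fermi factor, since $|(p+q)^2-\mu|\ge c\,q\sqrt{\mu}$ there, so $M(p,q)\lesssim 1/(q\sqrt{\mu})$ and the weight $|p-\sqrt{\mu}|\approx q$ cancels the $1/q$. Equivalently, as the paper does, split at $p=\sqrt{\mu}\pm q$ and on the middle (sign-mismatched) interval use the exact identity $M(p,q)=1/(2pq)$. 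In short, your single pointwise bound discards exactly the cancellation that makes the integral finite, and no amount of "fussy case split" within that bound can recover it; you must switch bounds near $p=\sqrt{\mu}+q$.
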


Let $X_T:=N_T(\cdot,q)-W_T(q)$.
For $T< \mu/2$ and $|q|< \frac{3}{2} \sqrt{\mu}$, the integral kernel of $X_T$ is given by
\begin{multline}
X_T(x,y)=\frac{1}{2\pi} \int_{-\sqrt{3\mu}}^{\sqrt{3 \mu}} N_T(p,q)\Big[ (e^{i p(x-y)}-e^{i \sqrt{\mu} \frac{p}{|p|} (x-y)})\chi_{|q|< \frac{1}{2} \sqrt{\mu}}+  (e^{i p(x-y)}-1)\chi_{||q|-\sqrt{\mu}|< \frac{1}{2} \sqrt{\mu}}\Big] \dd p\\
 + \frac{1}{2\pi} \int_{|p|>\sqrt{3 \mu}} N_T(p,q) e^{i p(x-y)}\dd p
\end{multline}
The second integral is uniformly bounded in $T$ and $q$ according to \eqref{nt-bound-2}.
Using $|e^{i p x}-e^{i q x}|\leq |x||p-q| $ and then Lemma~\ref{lea:1d-approx-1}, we bound the first term by
\[
\frac{|x-y|}{2\pi} \int_{-\sqrt{3\mu}}^{\sqrt{3 \mu}} M(p,q)  \Big[ ||p|-\sqrt{\mu}|\chi_{|q|< \frac{1}{2} \sqrt{\mu}}+  |p|\chi_{||q|-\sqrt{\mu}|< \frac{1}{2} \sqrt{\mu}}\Big]\dd p  < C |x-y|
\]
for some constant $C$ independent of $x,y$.
In particular, the Hilbert-Schmidt norm $\lVert V^{1/2} X_T |V|^{1/2}\rVert_{\textrm{HS}}^2\leq C \int_{\BR^2} |V(x)|(1+|x-y|)^2 |V(y)|\dd x \dd y $.
The latter is finite since we assume that $V, |\cdot|^2V \in L^1$.

In total, we have proved the approximation for $V^{1/2}N_T|V|^{1/2}$ stated in \eqref{eq:nt-approx}.
To prove \eqref{eq:bt-approx}, the approximation for $V^{1/2}B_T|V|^{1/2}$, we proceed similarly.
We start with the case $T\geq \mu/2$ or $|q|\geq \sqrt{\mu}/2$.
Analogously to \eqref{eq:pf-lea31-2} we have
\[
\lVert V^{1/2} B_T(\cdot, q) |V|^{1/2} \rVert \leq  (2\pi)^{-1}\lVert V\rVert_1 \int_\BR B_T(p,q) \dd p .
\]
For $T\geq \mu/2$ or $|q|\geq \frac{3}{2}\sqrt{\mu}$ boundedness follows from \eqref{nt-bound-2}, as for $N_T$.
In contrast to $N_T$, the integral of $B_T$ is also bounded for $\sqrt{\mu}/2\leq |q|\leq \frac{3}{2}\sqrt{\mu}$, i.e.
\[
\sup_{T>0,|q|\geq \sqrt{\mu}/2} \int_\BR B_T(p,q) \dd p<\infty,
\]
which was shown in the proof of \cite[Lemma 4.4]{hainzl_boundary_2022}.

For the case $T<\mu/2$ and $q<\sqrt{\mu}/2$, we define $X_T=B_T(\cdot, q)-Q_T(q)$.
The integral kernel is
\begin{equation}
X_T(x,y)=\frac{1}{2\pi} \int_{-\sqrt{3\mu}}^{\sqrt{3 \mu}} B_T(p,q) (e^{i p(x-y)}-e^{i \sqrt{\mu} \frac{p}{|p|} (x-y)}) \dd p\\
 + \frac{1}{2\pi} \int_{|p|>\sqrt{3 \mu}} B_T(p,q) e^{i p(x-y)}\dd p
\end{equation}
Again, the second integral is uniformly bounded in $T$ and $q$ by \eqref{nt-bound-2}, while it again follows from Lemma~\ref{lea:1d-approx-1} that the first is bounded by $C |x-y|$ for a constant $C$ independent of $T$ and $q$.
With the same argument as before, one concludes that the Hilbert-Schmidt norm of $X_T$ is bounded uniformly in $T$ and $q$.
\end{proof}

\begin{proof}[Proof of Lemma~\ref{lea:1d-approx-1}]
The intuition is that $M$ diverges linearly at $p=0$ or $p=\sqrt{\mu}$ if $q=\sqrt{\mu}$ or $q=0$, respectively.
We need to show that the additional factor of $||p|-\sqrt{\mu}|$ or $|p|$ is sufficient to cancel the divergence of the integral uniformly in $q$.
For fixed $q$, we will distinguish cases depending on the signs of $(p-q)^2-\mu$ and $(p+q)^2-\mu$.

We start by proving \eqref{eq:1d-approx-a}.
Since $M$ is an even function in both $p$ and $q$, it suffices to consider $p,q\geq 0$.
We split the following integral into three parts
\begin{multline}
\int_{0}^{\sqrt{3 \mu}} M(p,q) ||p|-\sqrt{\mu}| \dd p = \int_{0}^{\sqrt{\mu}-q} \frac{\sqrt{\mu}-p}{\mu-p^2-q^2} \dd p
+\int_{\sqrt{\mu}-q}^{\sqrt{\mu}+q} \frac{|p-\sqrt{\mu}|}{2pq} \dd p\\
+\int_{\sqrt{\mu}+q}^{\sqrt{3 \mu}} \frac{p-\sqrt{\mu}}{p^2+q^2-\mu}  \dd p,
\end{multline}
where in the first part both $(p-q)^2-\mu$ and $(p+q)^2-\mu$ are negative, in the second part $(p+q)^2-\mu$ becomes positive and in the third part both are positive.
The last term is bounded by
\[
\int_{\sqrt{\mu}}^{\sqrt{3 \mu}} \frac{p-\sqrt{\mu}}{p^2-\mu}  \dd p<\infty.
\]
For the middle term, we use that $(p-\sqrt{\mu})/p \leq 1$ due to the constraints $\frac{\sqrt{\mu}}{2}\leq \sqrt{\mu}-q\leq p \leq \sqrt{\mu}+q \leq 3\frac{\sqrt{\mu}}{2}$.
The factor $1/2q$ cancels with the size of the integration domain and the integral is thus bounded by 1.
For the first term we carry out the integration and obtain
\[
\int_{0}^{\sqrt{\mu}-q} \frac{\sqrt{\mu}-p}{\mu-p^2-q^2} \dd p= \frac{\sqrt{\mu}}{2} \frac{\ln(\sqrt{\mu}+\sqrt{\mu-q^2})}{\sqrt{\mu-q^2}}+\Big(\frac{1}{2}-\frac{\sqrt{\mu}}{2 \sqrt{\mu-q^2}} \Big)\ln (q) -\frac{1}{2} \ln\left(\frac{\sqrt{\mu}+q}{2}\right)
\]
Clearly, the first and the last summand are bounded uniformly for $0\leq q \leq \frac{\sqrt{\mu}}{2}$.
We need to check that the middle term is bounded as $q\to 0$, but this follows from
\[ 1-\frac{\sqrt{\mu}}{ \sqrt{\mu-q^2}}=\frac{-q^2}{\sqrt{\mu-q^2}(\sqrt{\mu-q^2}+\sqrt{\mu})}.
\]
This completes the proof of \eqref{eq:1d-approx-a}.

Now we shall prove \eqref{eq:1d-approx-b}.
Again since $M$ is even, it suffices to consider $p\geq 0$ and we split
\begin{equation}
\int_{0}^{\sqrt{3 \mu}} M(p,q) p \dd p = \int_{0}^{|\sqrt{\mu}-q|} \frac{p}{|\mu-p^2-q^2|} \dd p
+\int_{|\sqrt{\mu}-q|}^{\sqrt{\mu}+q} \frac{1}{2q} \dd p
+\int_{\sqrt{\mu}+q}^{\sqrt{3 \mu}} \frac{p}{p^2+q^2-\mu}  \dd p.
\end{equation}
The first term equals $\frac{1}{2}\Big| \ln\Big(\frac{\sqrt{\mu}+q}{2q}\Big)\Big|$.
The second term is bounded above by 1.
The third term equals $\frac{1}{2}\ln\Big(\frac{2\mu+q^2}{2q(\sqrt{\mu}+q)}\Big)$.
Hence, all of them are uniformly bounded for $\sqrt{\mu}/2\leq q\leq 3\sqrt{\mu}/2$.
\end{proof}

\subsection{Proof of Lemma~\ref{lea:1d-w-asy}}
\begin{proof}[Proof of Lemma~\ref{lea:1d-w-asy}]
For $T\to 0$, the function $N_T$ converges pointwise to $M$.
For $q\in\{0,\sqrt{\mu}\}$, the integral $\int M(p,q)\dd p$ diverges.
Hence, $n_T$ will diverge for $T$ small and $q$ close to zero or $\sqrt{\mu}$.
For $m_T$, there is no divergence at $q=\sqrt{\mu}$, since $B_T$ converges to zero instead of $M$ when $(p+q)^2-\mu$ and $(p-q)^2-\mu$ have opposite signs \cite{hainzl_boundary_2022}.

To compute the upper bound on $n_T$, the main idea is to find the origin of the divergence.
Everywhere else the integral of $N_T$ can be bounded uniformly in $T$ and $q$.
For the diverging parts we bound the function $N_T$ above by a function for which the integral can be computed explicitly.

For the lower bounds, the idea is to reduce them to similar lower bounds which are already known \cite{hainzl_critical_2008,hainzl_boundary_2022}.

\textbf{Upper bound.}
We shall first prove the upper bound on $n_T(q)$.
Since $n_T(-q)=n_T(q)$ it suffices to consider $q>0$.
We will use the bound $N_T(p,q)\leq \min\{1/2T, M(p,q)\}$ and $M(p,q)\leq 1/|p^2+q^2-\mu|$.
It follows that for $T\geq \mu/2$ or $q\geq 3 \sqrt{\mu}/2$, the function $N_T(p,q)$ is bounded above by $1/\mu$.
Thus $n_T(q)\leq C_1$ for a constant $C_1$ and it suffices to restrict to $T\leq\mu/2$ and $q\leq 3 \sqrt{\mu}/2$ from now on.

For $\sqrt{\mu}/2 \leq q\leq 3 \sqrt{\mu}/2$, we estimate as follows
\begin{equation}\label{eq:nt-bound-1}
n_T(q)\leq \int_0^{|q-\sqrt{\mu}|}M(p,q) \dd p +\int_{|q-\sqrt{\mu}|}^{|q-\sqrt{\mu}|+T/\sqrt{\mu}}  \frac{1}{2T} \dd p +  \int_{|q-\sqrt{\mu}|+T/\sqrt{\mu}}^{\sqrt{\mu}} M(p,q) \dd p+\int_{\sqrt{\mu}}^{\sqrt{3\mu}} M(p,q) \dd p
\end{equation}
The second term is clearly bounded by a constant independent of $T$ and $q$.
The same is true for the last term, since there $p^2+q^2\geq \mu+\mu/4$, and thus $M(p,q)\leq 4/\mu$.
For the first term, carrying out the integration gives
\[
\begin{matrix}
\frac{1}{\sqrt{\mu-q^2}} \artanh\left(\frac{(\sqrt{\mu}-q)^{1/2}}{(\sqrt{\mu}+q)^{1/2}}\right) & \textrm{if}\  \sqrt{\mu}/2\leq q< \sqrt{\mu},\\
0 & \textrm{if}\   |q|= \sqrt{\mu},\\
\frac{1}{\sqrt{q^2-\mu}} \arctan\left(\frac{(q-\sqrt{\mu})^{1/2}}{(\sqrt{\mu}+q)^{1/2}}\right)& \textrm{if}\ \sqrt{\mu} < q \leq 3\sqrt{\mu}/2.
\end{matrix}
\]
This is bounded by
\[
\begin{matrix}
\frac{1}{\sqrt{\mu}+q} \sup_{0< x\leq 1/\sqrt{3}} x^{-1}\artanh x & \textrm{if}\ \sqrt{\mu}/2\leq q< \sqrt{\mu},\\
\frac{1}{\sqrt{\mu}+q} \sup_{x>0 } x^{-1}\arctan x & \textrm{if}\ \sqrt{\mu} < q \leq 3\sqrt{\mu}/2.
\end{matrix}
\]
Both of the suprema give finite values.
Therefore, the first term in \eqref{eq:nt-bound-1} is bounded by a constant independent of $T,q$.
The third term in \eqref{eq:nt-bound-1} equals
\[
 \int_{|q-\sqrt{\mu}|+T/\sqrt{\mu}}^{\sqrt{\mu}} \frac{1}{2pq} \dd p=\frac{1}{2q}\ln \left(\frac{\sqrt{\mu}}{|q-\sqrt{\mu}|+T/\sqrt{\mu}}\right)
\]
For $|q-\sqrt{\mu}|, T/\sqrt{\mu}\leq \sqrt{\mu}/2$ observe that
\begin{equation}\label{eq:nt-bound-4}
\left|\left(\frac{1}{2\sqrt{\mu}}-\frac{1}{2q}\right)\ln \left(\frac{\sqrt{\mu}}{|q-\sqrt{\mu}|+T/\sqrt{\mu}}\right)\right|
 \leq  \frac{|q-\sqrt{\mu}|}{2 q \sqrt{\mu}} \ln \left(\frac{\sqrt{\mu}}{|q-\sqrt{\mu}|}\right)
\leq \frac{1}{\sqrt{\mu}} \sup_{x\geq 2} \frac{\ln x}{x}<\infty.
\end{equation}
Therefore, \eqref{eq:nt-bound-1} is bounded above by $\frac{1}{2\sqrt{\mu}}  \ln \left(\frac{\sqrt{\mu}}{|q-\sqrt{\mu}|+T/\sqrt{\mu}}\right) +C_2$ for some $C_2<\infty$.

It remains to consider $0\leq q < \sqrt{\mu}/2$.
Similarly to the previous case, we bound
\begin{equation}\label{eq:nt-bound-2}
n_T(q)\leq \int_0^{\sqrt{\mu}-q-T/\sqrt{\mu}} M(p,q) \dd p+  \int_{\sqrt{\mu}-q}^{\sqrt{\mu}+q} M(p,q) \dd p+ \int_{\sqrt{\mu}+q+T/\sqrt{\mu}}^{\sqrt{3\mu}} M(p,q) \dd p+C
\end{equation}
for some $C<\infty$, where $C$ comes from the intervals of order $T$ where we apply the bound $N_T(p,q)\leq 1/2T$.
The first term equals
\begin{equation}\label{eq:nt-bound-3}
 \int_0^{\sqrt{\mu}-q-T/\sqrt{\mu}} \frac{1}{\mu-p^2-q^2}\dd p=\frac{1}{2\sqrt{\mu-q^2}} \ln\left(1+2\frac{\sqrt{\mu}-q-T/\sqrt{\mu}}{\sqrt{\mu-q^2}-\sqrt{\mu}+q+T/\sqrt{\mu}}\right)
\end{equation}
Observe that $\sqrt{\mu-q^2}-\sqrt{\mu}\geq -q^2/\sqrt{\mu} $, which for $q\leq \sqrt{\mu}/2$ is bounded below by $-q/2$.
Due to the constraints $q/\sqrt{\mu},T/\mu \leq 1/2$, we have $1\leq \frac{1}{q/\sqrt{\mu}+T/\mu}$.
Combining these bounds, the expression in \eqref{eq:nt-bound-3} is bounded above by
\[
\frac{1}{2\sqrt{\mu-q^2}} \ln\left(\frac{5}{q/\sqrt{\mu}+T/\mu}\right)\leq \frac{1}{2\sqrt{\mu}} \ln\left(\frac{5}{q/\sqrt{\mu}+T/\mu}\right)+C
\]
for some finite constant $C$ independent of $q$ and $T$, which follows from an estimate analogous to \eqref{eq:nt-bound-4}.
The second term in \eqref{eq:nt-bound-2} equals $0$ for $q=0$ and for $q> 0$
\[
\int_{\sqrt{\mu}-q}^{\sqrt{\mu}+q} \frac{1}{2pq} \dd p=\frac{1}{2q}\ln \left(1+\frac{2q}{\sqrt{\mu}-q}\right) \leq \frac{1}{\sqrt{\mu}-q} \sup_{x>0} x^{-1} \ln(1+ x).
\]
For $0\leq q \leq  \sqrt{\mu}/2$ this is bounded by a finite constant independent of $q$.
The third term in \eqref{eq:nt-bound-2} equals
\[
 \int_{\sqrt{\mu}+q+T/\sqrt{\mu}}^{\sqrt{3\mu}} \frac{1}{p^2+q^2-\mu} \dd p\leq  \int_{\sqrt{\mu}+q+T/\sqrt{\mu}}^{\sqrt{3\mu}} \frac{1}{2\sqrt{\mu}(p-\sqrt{\mu})} \dd p=\frac{1}{2\sqrt{\mu}}\ln \left(\frac{\sqrt{3}-1}{q/\sqrt{\mu}+T/\mu}\right).
\]
Therefore, the expression in \eqref{eq:nt-bound-2} is bounded by
\[
\frac{1}{\sqrt{\mu}}\ln \left(\frac{1}{q/\sqrt{\mu}+T/\mu}\right) + C_3
\]
for some finite constant $C_3$.
Collecting all the bounds for $n_T(q)$, we obtain \eqref{eq:nt-up-bd} with $C=\max\{ C_1,C_2,C_3\}$.

For the upper bound on $m_T(q)$, since $m_T(q)\leq n_T(q)$ we apply the bound \eqref{eq:nt-up-bd} on $n_T(q)$ derived above for $|q|\leq \sqrt{\mu}/2$.
For $|q|\geq \sqrt{\mu}/2$, it was shown in \cite[Lemma 4.4]{hainzl_boundary_2022} that $ \sup_{T>0,|q|\geq \sqrt{\mu}/2} m_T(q) = \sup_{T>0,|q|\geq \sqrt{\mu}/2} \int_\BR B_T(p,q) \dd p<\infty$.
Hence, the upper bound \eqref{eq:mt-up-bd} follows.

\textbf{Lower bounds.}
The desired lower bound for
\[
m_T(0)=n_T(0)= \int_{0}^{\sqrt{3\mu}} \frac{\tanh((p^2-\mu)/2T}{p^2-\mu} \dd p
\]
follows from \cite[Lemma 3.5]{hainzl_boundary_2022}.
To prove the lower bound for $n_T(\sqrt{\mu})$, we first observe that if $|x|\geq |y|$, then due to monotonicity of $x/\tanh (x)$
\[
\frac{2}{\frac{x}{\tanh(x)}+\frac{y}{\tanh(y)}}\geq \frac{\tanh(x)}{x}.
\]
This implies that
\[
n_T(\sqrt{\mu})\geq \int_0^{\sqrt{\mu}/2} \frac{\tanh\Big(\frac{p(2\sqrt{\mu}+p)}{2T}\Big)}{p(2\sqrt{\mu}+p)} \dd p=\int_0^{5\mu/4} t^{-1} \tanh\Big(\frac{t}{2T}\Big) \frac{1}{2\sqrt{\mu+t}}\dd t,
\]
where we substituted $t=p(2\sqrt{\mu}+p)$.
We can rewrite the latter as
\[
 \frac{1}{2\sqrt{\mu}}\int_0^{5\mu/4} t^{-1}\tanh\Big(\frac{t}{2T}\Big)\dd t - \frac{1}{2} \int_0^{5\mu/4} \frac{\tanh\Big(\frac{t}{2T}\Big)}{\sqrt{\mu+t}\sqrt{\mu}(\sqrt{\mu+t}+\sqrt{\mu})} \dd t
\]
The second term is bounded as $T\to 0$ and the first term asymptotically equals $\frac{1}{2\sqrt{\mu}} \ln \mu/T +O(1)$ as was shown in \cite[Lemma 1]{hainzl_critical_2008}.
In total, we have $n_T(\sqrt{\mu})\geq \frac{1}{2\sqrt{\mu}} \ln \mu/T +O(1)$.
\end{proof}

\section{Proof of Theorem~\ref{thm:tu-weak-coupling}}\label{sec:pf2}
\begin{proof}[Proof of Theorem~\ref{thm:tu-weak-coupling}]
The goal is to compute the weak coupling asymptotics of $T_l$ and $T_u$ in dimension one.
Recall the operators $\FT_\mu$ and $\FT_0$ defined above \eqref{eq:def-mn}.
For $x\in\{0,\mu\}$, the operator $V^{1/2}\FT_x^\dagger \FT_x |V|^{1/2}$ has the same spectrum as $\FT_x V \FT_x^\dagger$ up to zero.
For $x=0$, the spectrum is just the number $\frac{1}{2}e_0^s=(2\pi)^{-1/2}\widehat{V}(0)$.
The operator $\FT_\mu V \FT_\mu^\dagger$ is precisely $\mathcal{V}_\mu$ defined in \eqref{eq:Vmu}.
In particular, $\sup \sigma_s (\FT_\mu V \FT_\mu^\dagger)=e_\mu^s = \frac{\widehat{V}(0)+\widehat{V}(2\sqrt{\mu})}{(2\pi)^{1/2}}$.
By assumption, we have $e_\mu^s>0$.
In total, we observed that $\sup \sigma_s(V^{1/2}\FT_\mu^\dagger \FT_\mu |V|^{1/2})=e_\mu^s>0$ and $\sup \sigma_s(V^{1/2}\FT_0^\dagger \FT_0 |V|^{1/2})=\frac{1}{2}e_0^s$.

Combining Lemmas~\ref{lea:1d-approx} and \ref{lea:1d-w-asy}, we obtain that for $T\to 0$
\begin{multline}
\sup_q \sup \sigma_s(V^{1/2} N_T(\cdot,q) |V|^{1/2})=\sup_q \sup \sigma_s(V^{1/2} W_T(q) |V|^{1/2})+O(1)\\
=\mu^{-1/2} \ln(\mu/T) \max\left\{e_\mu^s, \frac{1}{2}e_0^s\right\} +O(1)
\end{multline}
and similarly
\[
\sup_q \sup \sigma_s(V^{1/2} B_T(\cdot,q) |V|^{1/2})=\mu^{-1/2} \ln(\mu/T) e_\mu^s +O(1).
\]
According to the Birman-Schwinger principle \eqref{eq:BS}, $\inf \sigma_s(D_{T_u(\lambda)}-\lambda V)=0$ is equivalent to $\sup_q \sup \sigma_s(V^{1/2} N_{T_u(\lambda)}(\cdot,q) |V|^{1/2})=\lambda^{-1}$.
In particular, for all $\lambda>0$ the temperature $T_u(\lambda)$ is positive.
At weak coupling we obtain the asymptotics
\[ T_u(\lambda)=\mu e^{-\mu^{1/2}/(\lambda \max\{e_\mu^s,\frac{1}{2}e_0^s\}) +O(1)}
\]
as $\lambda\to 0$.
Analogously, we obtain for $\lambda \to 0$
\[
T_l(\lambda)=\mu e^{-\mu^{1/2}/(\lambda e_\mu^s) +O(1)}.
\]
\end{proof}

\section{Proof of Theorem~\ref{thm:rel-tu-difference}}\label{sec:pf3}
The goal is to prove
\[
\lim_{\lambda\to 0}\frac{T_u^1(\lambda)-T_c^0(\lambda)}{T_c^0(\lambda)}=0.
\]
In \cite{roos_bcs_2023} a similar statement was proved for $T_l^1$ instead of $T_u^1$, i.e.
\[
\lim_{\lambda\to 0}\frac{T_l^1(\lambda)-T_c^0(\lambda)}{T_c^0(\lambda)}=0.
\]
For a large part of the proof, one can follow the same strategy and just replace $T_l$ by $T_u$ and $B_T$ by $N_T$.
Therefore, instead of repeating the whole proof, we only mention which changes need to be made to \cite[Section 6]{roos_bcs_2023}.
Recall that $V\geq 0$ by assumption.
This assumption is necessary for the proof strategy in \cite{roos_bcs_2023} to work.
The function
\begin{equation}
E_T(q)=\sup_{q\in \BR^d}\lVert V^{1/2} N_T(\cdot , q) V^{1/2} \rVert_s -\lVert V^{1/2} N_T(\cdot , q) V^{1/2} \rVert_s
\end{equation}
plays an important role.
Here, $\lVert \cdot \rVert_s$ denotes the operator norm on the space of even $L^2$-functions.
The argument in \cite{roos_bcs_2023} is based on the observation that $E_T(q)\geq 0$ as well as three key Lemmas, \cite[Lemma 6.1, Lemma 6.2, Lemma 6.3]{roos_bcs_2023}.

The proofs of \cite[Lemma 6.1, Lemma 6.2]{roos_bcs_2023} rely on the bound $B_T\leq M$ only, where $M$ was defined in \eqref{eq:Mdef}.
Since the function $N_T$ is also bounded above by $M$, the proofs go through just replacing $B_T$ by $N_T$.

For \cite[Lemma 6.3]{roos_bcs_2023}, however, while the statement is still true after replacing $B_T$ with $N_T$, some non-trivial changes need to be made to the proof.
The following Lemma is left to prove.
\begin{lemma}
Let $\mu>0$, $d\in\{1,2,3\}$. 
Let $V\geq 0$ satisfy Assumptions~\ref{as1}, $(1+|\cdot|^2)V \in L^1(\BR)$ for $d=1$, $V\in L^1(\BR^d)$ for $d\in\{2,3\}$,  and $e_\mu^s>\max\{0,e_\mu^a\}$.
Let $0<\epsilon< \sqrt{\mu}$. There are constants $c_1,c_2,T_1>0$ such that for $0<T<T_1$ and $|q|>\epsilon$ we have $E_T(q)>c_1|\ln(c_2/T)|$.
\end{lemma}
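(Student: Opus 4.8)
The plan is to adapt the proof of \cite[Lemma 6.3]{roos_bcs_2023}, bounding $E_T(q)$ from below by the gap between the logarithmic blow-up of the Birman--Schwinger operator at total momentum $q'=0$ and a much weaker bound on its norm at momenta $|q|>\epsilon$. For the blow-up at $q'=0$ recall that $N_T(\cdot,0)=K_T^{-1}$, so
\[
\sup_{q'\in\BR^d}\lVert V^{1/2}N_T(\cdot,q')V^{1/2}\rVert_s\ \geq\ \lVert V^{1/2}K_T^{-1}V^{1/2}\rVert_s .
\]
By the low-temperature asymptotics of $V^{1/2}K_T^{-1}V^{1/2}$ from \cite{hainzl_bardeencooperschrieffer_2016,henheik_universality_2023} --- and, for $d=1$, directly from Lemmas~\ref{lea:1d-approx} and \ref{lea:1d-w-asy} via $\lVert V^{1/2}W_T(0)V^{1/2}\rVert_s=n_T(0)\,e_\mu^s\geq\mu^{-1/2}e_\mu^s\ln(\mu/T)+O(1)$ --- there is $b>0$ with $\lVert V^{1/2}K_T^{-1}V^{1/2}\rVert_s\geq b\ln(\mu/T)-C$ as $T\to 0$, the positivity of $b$ being exactly the hypothesis $e_\mu^s>0$.

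Next I would bound $\lVert V^{1/2}N_T(\cdot,q)V^{1/2}\rVert_s$ from above for $|q|>\epsilon$, distinguishing $d\in\{2,3\}$ from $d=1$. In dimensions two and three the claim is that this norm is bounded uniformly in $T>0$ and $|q|>\epsilon$. Since as Fourier multipliers $0\leq N_T(\cdot,q)\leq M(\cdot,q)$ by \eqref{eq:Mdef} and $V\geq 0$, it suffices to bound $\lVert V^{1/2}M(\cdot,q)V^{1/2}\rVert$; I would split $M(\cdot,q)=M(\cdot,q)\chi_{|p|\leq R}+M(\cdot,q)\chi_{|p|>R}$ with $R=\sqrt{2\mu}$. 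On $|p|>R$ the pointwise bound $M(p,q)\leq C(1+p^2)^{-1}$ together with form-boundedness of $V$ (the KLMN bound for $V$ as in Assumption~\ref{as1}, used already in the introduction) makes this part uniformly bounded. On $|p|\leq R$ I would use $\lVert V^{1/2}M(\cdot,q)\chi_{|p|\leq R}V^{1/2}\rVert_{\mathrm{HS}}\leq(2\pi)^{-d/2}\lVert V\rVert_1\int_{|p|\leq R}M(p,q)\,\dd p$ (recall $V\in L^1$ for $d\in\{2,3\}$), and the crucial point is that the singular set of $M(\cdot,q)$, namely $\{p:p\perp q,\ |p|^2=\mu-|q|^2\}$, has codimension at least two in $\BR^d$ for $d\geq 2$ (in contrast with the codimension-one zero set of $p^2+q^2-\mu$), so $\int_{|p|\leq R}M(p,q)\,\dd p$ is finite uniformly in $|q|>\epsilon$; a short separate estimate handles $|q|$ slightly above $\sqrt\mu$, where there is no singular set but $M$ is still large near the sphere $|p-q|=\sqrt\mu$. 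Hence $E_T(q)\geq b\ln(\mu/T)-C'$ for all $|q|>\epsilon$.

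In dimension one $N_T(\cdot,q)$ additionally blows up for $|q|$ near $\sqrt\mu$, so I would instead use Lemma~\ref{lea:1d-approx} to replace $V^{1/2}N_T(\cdot,q)V^{1/2}$ by $V^{1/2}W_T(q)V^{1/2}$ up to a uniformly bounded error, and Lemma~\ref{lea:1d-w-asy} to control $n_T(q)$. Since the two indicator functions in $W_T(q)$ are never simultaneously active, for $\epsilon<|q|<\sqrt\mu/2$ one gets $\lVert V^{1/2}W_T(q)V^{1/2}\rVert_s\leq n_T(q)\,e_\mu^s$ with $n_T(q)\leq\mu^{-1/2}\ln(\sqrt\mu/\epsilon)+C$ uniformly bounded, while for $|q|\geq\sqrt\mu/2$ one gets $\lVert V^{1/2}W_T(q)V^{1/2}\rVert_s\leq n_T(q)\,e_0^s\leq\tfrac12\mu^{-1/2}e_0^s\ln(\mu/T)+C$, where $\sup\sigma_s(V^{1/2}\FT_\mu^\dagger\FT_\mu V^{1/2})=e_\mu^s$ and $\sup\sigma_s(V^{1/2}\FT_0^\dagger\FT_0 V^{1/2})=\tfrac12 e_0^s$ as in the proof of Theorem~\ref{thm:tu-weak-coupling}. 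Thus $\lVert V^{1/2}N_T(\cdot,q)V^{1/2}\rVert_s\leq\tfrac12\mu^{-1/2}e_0^s\ln(\mu/T)+C_\epsilon$ for $T<\mu$ and $|q|>\epsilon$, and $E_T(q)\geq\mu^{-1/2}(e_\mu^s-\tfrac12 e_0^s)\ln(\mu/T)-C'$. Here $e_\mu^s-\tfrac12 e_0^s=(2\pi)^{-1/2}\widehat V(2\sqrt\mu)$ is strictly positive because $e_\mu^s>e_\mu^a$ is equivalent to $\widehat V(2\sqrt\mu)>0$, which is where $e_\mu^s>\max\{0,e_\mu^a\}$ is used. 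In every case we reach $E_T(q)\geq c\ln(\mu/T)-C_0$ with some $c>0$, uniformly for $|q|>\epsilon$, and choosing $c_1=c/2$, $c_2=\mu$ and $T_1=\min\{\mu,\mu e^{-2C_0/c}\}$ finishes the proof.

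I expect the main obstacle to be the uniform boundedness of $V^{1/2}N_T(\cdot,q)V^{1/2}$ for $d\in\{2,3\}$ and $|q|>\epsilon$: one must show $\int_{|p|\leq R}M(p,q)\,\dd p$ stays bounded as $|q|\to\sqrt\mu$, where the singular sphere of $M$ degenerates to a point, and separately control $|q|$ just above $\sqrt\mu$, which requires the full two-fold structure of $M$ (not merely $M\leq|p^2+q^2-\mu|^{-1}$) and casework on the signs of $(p\pm q)^2-\mu$ as in the proof of Lemma~\ref{lea:1d-approx-1}. This is precisely where the analysis genuinely departs from the one for $B_T$ in \cite{roos_bcs_2023}; in dimension one the corresponding point is only bookkeeping --- isolating the sharp leading coefficient $\tfrac12 e_0^s$ of $W_T(q)$ and checking it is strictly below the coefficient $e_\mu^s$ at $q'=0$, which forces $\widehat V(2\sqrt\mu)>0$.
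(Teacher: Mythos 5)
Your strategy coincides with the paper's: lower-bound $\sup_{q'}\lVert V^{1/2}N_T(\cdot,q')V^{1/2}\rVert_s$ via the $q'=0$ asymptotics (of order $e_\mu^s\mu^{d/2-1}\ln(\mu/T)$), upper-bound $\lVert V^{1/2}N_T(\cdot,q)V^{1/2}\rVert_s$ for $|q|>\epsilon$, and extract a logarithmic gap. Your $d=1$ argument via Lemmas~\ref{lea:1d-approx} and \ref{lea:1d-w-asy} is correct and matches the paper's; the key step that $e_\mu^s-\tfrac{1}{2}e_0^s=(2\pi)^{-1/2}\widehat V(2\sqrt\mu)>0$ under $e_\mu^s>e_\mu^a$ is exactly where the hypothesis enters. (A minor difference: the paper shows that $\sup_{q'}\lVert V^{1/2}N_T(\cdot,q')V^{1/2}\rVert_s=\lVert V^{1/2}N_T(\cdot,0)V^{1/2}\rVert_s$ for small $T$ via the Birman--Schwinger correspondence with $K_T-\lambda V$ and Remark~\ref{rem:uniquetc}, rather than your one-sided inequality, but the inequality suffices here.)

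The gap, which you flag yourself, is the uniform bound on $\sup_{|q|>\epsilon}\lVert V^{1/2}M(\cdot,q)V^{1/2}\rVert$ for $d\in\{2,3\}$. The codimension-two observation is the right heuristic but does not by itself deliver the required uniformity in $q$: near the codimension-two singular circle $\{p\perp q,\ |p|^2=\mu-|q|^2\}$, the denominator $|(p-q)^2-\mu|+|(p+q)^2-\mu|$ behaves like $|(p_0-q)\cdot\eta|+|(p_0+q)\cdot\eta|+O(\eta^2)$ with $p=p_0+\eta$, and the non-degeneracy of the linear part depends on the angle between $p_0\pm q$; as $|q|\uparrow\sqrt\mu$ the circle shrinks to the origin and $p_0\pm q$ become anti-parallel, so the linear map degenerates and the quadratic terms become essential. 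One must also treat $|q|>\sqrt\mu$, where the singular set is empty but $M$ remains large on the thin lens-shaped region. The paper handles all of this by the sign-based decomposition of $\{|p|^2<3\mu\}$ into $A_2$ (where $M=\frac{1}{2|p\cdot q|}$) and $A_3$ (where $M=\frac{1}{|p^2+q^2-\mu|}$), reducing to explicit one-variable integrals that are then bounded uniformly in $|q|>\epsilon$ with separate cases depending on whether $|q|$ is above or below $\sqrt\mu$ --- precisely the casework you anticipate in your last paragraph. So your route is the paper's route, but the central estimate for $d\in\{2,3\}$ is left as a sketch rather than proved.
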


\begin{proof}
First we need to understand the asymptotics of $\sup_{q\in \BR}\lVert V^{1/2} N_T(\cdot , q) V^{1/2} \rVert_s$ for $T\to 0$.
We shall argue that at small enough temperatures, the supremum is attained at $q=0$.
Recall the assumption $e_\mu^s>\max\{0,e_\mu^a\}$.
According to Remark~\ref{rem:uniquetc}, there is a $\lambda_0$ such that for $0<\lambda\leq \lambda_0$ it holds that $T_c^0(\lambda)=T_u^0(\lambda)$ and the ground state of $K_{T_c^0(\lambda)}-\lambda V$ is even.
Let $T_0=T_c^0(\lambda_0)$.
Due to the monotonicity of $N_T$ in $T$, for all $T<T_0$ there is a $\lambda<\lambda_0$ such that $T=T_c^0(\lambda)=T_u^0(\lambda)$.
The Birman-Schwinger principle for $D_T-\lambda V$ and $K_T-\lambda V$ implies that
\[
\sup_q \lVert V^{1/2} N_T(\cdot , q) V^{1/2} \rVert_s = \frac{1}{\lambda}=\lVert V^{1/2} N_T(\cdot , 0) V^{1/2} \rVert_s.
\]
Hence, the supremum is attained at momentum $q=0$ for $T<T_0$.
For $q=0$ it was computed in \cite{hainzl_bardeencooperschrieffer_2016, henheik_universality_2023} that for $T\to 0$
\[
\lVert V^{1/2} N_T(\cdot , 0) V^{1/2} \rVert_s=e_\mu^s \mu^{d/2-1} \ln \left(\frac{\mu}{T}\right)+O(1).
\]
We need to show for $|q|>\epsilon$ that the second term in $E(q)$, $\lVert V^{1/2} N_T(\cdot , q) V^{1/2} \rVert_s$ grows more slowly.

\textbf{Dimension one:}
In dimension one, apart from the singularity at total momentum zero, there is another singularity at $|q|=\sqrt{\mu}$.
We need to check that the divergence at the latter singularity is slower than at $q=0$.
Combining Lemmas~\ref{lea:1d-approx} and \ref{lea:1d-w-asy}, if $|q|>\epsilon$ we obtain that
\[
\sup_{T>0} \sup_{|q|>\epsilon} \lVert V^{1/2} N_T(\cdot, q) V^{1/2}\rVert_s  \leq  \frac{1}{2}e_0^s \mu^{-1/2} \ln (\mu/T) +O(1),
\]
where $ e_0^s =2 (2\pi)^{-1/2} \widehat V (0)>0$ since $V\geq 0$.
The assumption $e_\mu^s>e_\mu^a$ implies $\widehat V(2\sqrt{\mu})>0$, and thus $e_\mu^s> \frac{1}{2} e_0^s$ and the claim follows.

\textbf{Dimensions two and three:}
We show that in dimensions two and three there is no divergence away from zero total momentum, i.e.~$\sup_{T>0} \sup_{|q|>\epsilon} \lVert V^{1/2} N_T(\cdot, q) V^{1/2}\rVert  <\infty$.
This follows if we prove that $ \sup_{|q|>\epsilon} \lVert V^{1/2} M(\cdot, q) V^{1/2}\rVert  <\infty$, where $M$ was defined in \eqref{eq:Mdef}.

\begin{figure}
\centering
\begin{tikzpicture}[scale=0.5]
\tikzmath{\smu=5;\l=10;\q=1.8;\t=35;\d=0.2;}

 \draw[fill, fill opacity=0.3, gray,even odd rule] (-\q,0) circle (\smu) (\q,0) circle (\smu);
\draw[black, name path= c1] (-\q,0) circle (\smu);
\draw[black, name path =c2]  (\q,0) circle (\smu);

\draw (0,2*\l/3) node[left]{$|\tilde p|$};
\draw (\l,0) node[below]{$p_1$};
\draw (\q,0) node[below]{$\vert q \vert$};
\draw (-\q,0) node[below]{$-\vert q \vert$};
\draw (0,\smu) node[above right]{$\sqrt{\mu}$};

\draw (0,\smu/3) node[above right]{$A_3$};
\draw (\smu+\q,\smu/3) node[above right]{$A_3$};
\draw (\smu-\q,\smu/3) node[above right]{$A_2$};
\draw (-\smu+\q,\smu/3) node[above left]{$A_2$};

\draw[black, thick] ($ (\q,0)+(0,\d)$) -- ($ (\q,0)-(0,\d)$);
\draw[black, thick] ($ (-\q,0)+(0,\d)$) -- ($ (-\q,0)-(0,\d)$);
\draw[black, thick] ($ (0,\smu)+(\d,0)$) -- ($ (0,\smu)-(\d,0)$);

 \draw[black, dashed, decoration={markings, mark=at position 1 with {\arrow[scale=2,>=stealth]{>}}},
        postaction={decorate}]
 (-\l,0)--(\l,0);
 \draw[black, dashed, decoration={markings, mark=at position 1 with {\arrow[scale=2,>=stealth]{>}}},
        postaction={decorate}]
 (0,-2*\l/3)--(0,2*\l/3);
\end{tikzpicture}

\caption{Two circles of radius $\sqrt{\mu}$, centered at $(-\vert q \vert,0)$ and $(\vert q \vert,0)$.
Assume that the coordinate system is chosen such that $q=(|q|,0,0)$ and write the vector $p=(p_1, \tilde p)$.
The points in $A_2$ lie in one of the circles, but not in the other. Thus $A_2$ is the shaded area.
The points in $A_3$ either lie in both circles or outside both of them, i.e.~the white area.
The sketch is adapted from \cite{roos_bcs_2023}.}
\label{fig:sketch}

\end{figure}
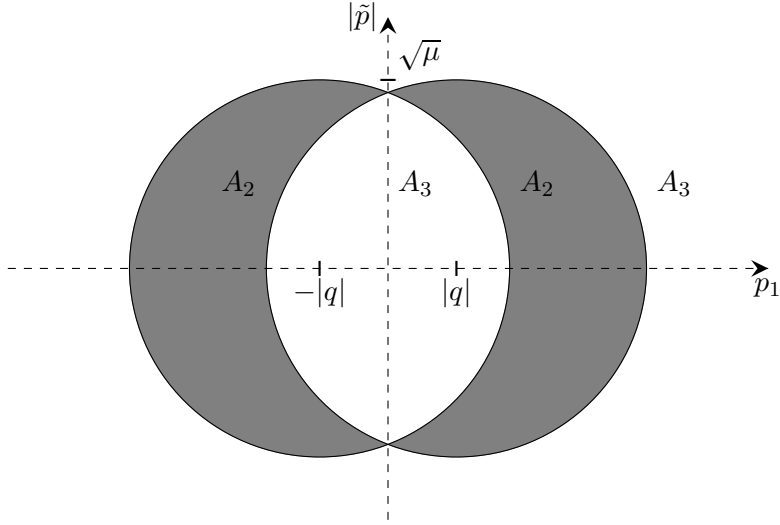

We define the sets $A_1=\{p \in \BR^d | p^2>3\mu\}$, $A_2=\{p \in \BR^d | p^2<3\mu, ((p-q)^2-\mu)((p+q)^2-\mu)<0 \}$, and $A_3=\{p \in \BR^d | p^2<3\mu, ((p-q)^2-\mu)((p+q)^2-\mu)>0 \}$.
The sets $A_2$ and $A_3$ are sketched in Figure~\ref{fig:sketch}.
Let $\chi_{A_j}$ denote the corresponding characteristic functions.
To bound the norm of $V^{1/2} M(\cdot, q) V^{1/2}$, we split the operator into the sum of the three operators $V^{1/2} M(\cdot, q) \chi_{A_j}(\cdot) V^{1/2}$ for $1\leq j\leq 3$ and bound each of the norms.

Recall that $M(\cdot,q)\chi_{A_1}(p)\leq C/(1+p^2)$ according to \eqref{nt-bound-2}.
It follows from the Hardy-Littlewood-Sobolev inequality that $\lVert V^{1/2} \frac{1}{1-\Delta} V^{1/2}\rVert <\infty$, where $\Delta$ is the Laplacian \cite{hainzl_bardeencooperschrieffer_2016,henheik_universality_2023,lieb_analysis_2001}.
Therefore,
\[
\sup_{|q|>\epsilon} \lVert V^{1/2} M(\cdot, q) \chi_{A_1}(\cdot) V^{1/2}\rVert  <\infty.
\]
To bound
\[
\sup_{|q|>\epsilon} \lVert V^{1/2} M(\cdot, q) \chi_{A_2}(\cdot) V^{1/2}\rVert,
\]
one uses the Schwarz inequality to obtain
\begin{equation}\label{eq:23d_pf_2}
\sup_{|q|>\epsilon} \lVert V^{1/2} M(\cdot, q) \chi_{A_2}(\cdot) V^{1/2}\rVert  \leq \lVert V\rVert_1 \sup_{|q|>\epsilon}\int_{\BR^d} M(p,q)\chi_{A_2}(p)\dd p.
\end{equation}
To bound the latter, first notice that we may assume $q=(|q|,0)$ without loss of generality by rotating the coordinate system of $p$ and $q$.
Observe that
\[
 M(p,(|q|,0))\chi_{A_2}(p)=\frac{1}{2 |p_1| |q|}.
\]
It is evident from Figure~\ref{fig:sketch} that in the domain $A_2$, we have $\max\{0,|q|-\sqrt{\mu}\}\leq |p_1|\leq \sqrt{\mu}+|q|$ and
\[
|\tilde p|\in ( \sqrt{\max\{0, \mu-(|p_1|+ |q|)^2\}}, \sqrt{\mu-(|p_1|-|q|)^2}).
\]
We first carry out the integration over the angular part of $\tilde p$.
In $d=3$, the angular part gives a contribution $2 \pi |\tilde p| \leq 2 \pi \sqrt{\mu}=:c_3$, in $d=2$ we get a factor $c_2=2$.
Integrating now over the radial part of $\tilde p$ and using the symmetry in $p_1\to -p_1$, gives
\begin{equation}\label{eq:23d_pf_1}
\int_{\BR^d} M(p,(|q|,0))\chi_{A_2}(p)\dd p \leq \frac{c_d}{|q|}\int_{\max\{0,|q|-\sqrt{\mu}\}}^{\sqrt{\mu}+|q|} \frac{1}{p_1} \Big(\sqrt{\mu-(p_1-|q|)^2}- \sqrt{\max\{0, \mu-(p_1+ |q|)^2\})}\Big)\dd p_1.
\end{equation}
Let us distinguish the cases $q\geq \sqrt{\mu}$ and $q \leq \sqrt{\mu}$.
For $q\geq \sqrt{\mu}$ the right hand side of \eqref{eq:23d_pf_1} reduces to
\[
\frac{c_d}{|q|}\int_{|q|-\sqrt{\mu}}^{\sqrt{\mu}+|q|} \frac{1}{p_1}\sqrt{\mu-(p_1-|q|)^2}\dd p_1.
\]
Using that $\mu-(p_1-|q|)^2\leq 2 p_1 |q|$ for $|q|\geq \sqrt{\mu}$, this is bounded by
\[
\frac{\sqrt{2} c_d}{|q|^{1/2}}\int_{|q|-\sqrt{\mu}}^{\sqrt{\mu}+|q|} \frac{1}{p_1^{1/2}}\dd p_1 \leq  \frac{2\sqrt{2} c_d}{|q|^{1/2}}(\sqrt{\mu}+|q|)^{1/2}\leq 4 c_d.
\]

Now, consider the case $q\leq \sqrt{\mu}$.
We rewrite the right hand side of \eqref{eq:23d_pf_1} as
\begin{multline}\label{eq:23d_pf_3}
\frac{c_d}{|q|}\int_{0}^{\sqrt{\mu}-|q|} \frac{1}{p_1} \Big(\sqrt{\mu-(p_1-|q|)^2}- \sqrt{\mu-(p_1+ |q|)^2}\Big)\dd p_1\\
 + \frac{c_d}{|q|}\int_{\sqrt{\mu}-|q|}^{\sqrt{\mu}+|q|} \frac{1}{p_1} \sqrt{\mu-(p_1-|q|)^2}\dd p_1.
\end{multline}
For the first term, we shall use that
\begin{multline}
\sqrt{\mu-(p_1-|q|)^2}- \sqrt{\mu-(p_1+ |q|)^2}=\frac{4 p_1 |q|}{\sqrt{\mu-(p_1-|q|)^2}+  \sqrt{\mu-(p_1+ |q|)^2}}\\
\leq \frac{4 p_1 |q|}{ \sqrt{\mu-(p_1+ |q|)^2}}
\leq \frac{4 p_1 |q|}{\mu^{1/4}(\sqrt{\mu}-|q|-p_1)^{1/2}}
\end{multline}
For the second term in \eqref{eq:23d_pf_3}, notice that for $p_1\geq \sqrt{\mu}-|q|$, we have
\[
\mu-(p_1-|q|)^2 =\mu-p_1^2-q^2+2 p_1 |q| \leq  \mu-(\sqrt{\mu}-|q|)^2-q^2+2p_1|q| = 2|q|( \sqrt{\mu}-|q|+ p_1)
\]
and $p_1 \geq (p_1+\sqrt{\mu}-|q|)/2$.
In total, we can bound \eqref{eq:23d_pf_3} by
\begin{multline}
c_d\int_{0}^{\sqrt{\mu}-|q|} \frac{4 }{\mu^{1/4}(\sqrt{\mu}-|q|-p_1)^{1/2}}\dd p_1
 + \frac{2^{1/2}c_d}{|q|^{1/2}}\int_{0}^{\sqrt{\mu}+|q|} \frac{2}{(p_1+\sqrt{\mu}-|q|)^{1/2}} \dd p_1\\
\leq \frac{8c_d}{\mu^{1/4}} (\sqrt{\mu}-|q|)^{1/2}+ \frac{8 c_d \mu^{1/4}}{|q|^{1/2}} .
\end{multline}
Since we assumed that $\epsilon \leq |q| \leq \sqrt{\mu}$, this is bounded by $8 c_d(1+\mu^{1/4}/\sqrt{\epsilon})$.
This completes the proof of $\sup_{|q|>\epsilon} \lVert V^{1/2} M(\cdot, q) \chi_{A_2}(\cdot) V^{1/2}\rVert <\infty$.

We are left with showing
\[
\sup_{|q|>\epsilon} \lVert V^{1/2} M(\cdot, q) \chi_{A_3}(\cdot) V^{1/2}\rVert<\infty.
\]
This follows from elementary computations, as for $A_2$.
In fact, the computation for $\sup_{|q|>\sqrt{\mu}/2}$ is spelled out in \cite[Proof of Lemma 6.9 part (iv)]{roos_bcs_2023}.
The method carries over to the case $|q|>\epsilon$ and we therefore skip the details here.
\end{proof}

\section{Proof of Theorem~\ref{thm:rel-tu-difference-strongc}}\label{sec:pf4}
The proof of Theorem~\ref{thm:rel-tu-difference-strongc} follows the same strategy as the proof of the analogous result for $T_l^1$ instead of $T_u^1$ in \cite[Theorem 1.1]{hainzl_boundary_2022}.
Since the proof only requires two non-trivial modifications, we do not reproduce the whole argument here, but only explain the necessary changes.
It should be noted that in \cite{hainzl_boundary_2022} the momenta in the relative and center of mass coordinates are scaled by a factor of $1/2$ compared to this paper, which becomes evident for instance when comparing the definition of $B_T$ in \cite[(2.2)]{hainzl_boundary_2022} with the one in \eqref{BT}.

Recall the definitions of $N_T(p,q)$ and $B_T(p,q)$ in \eqref{NT} and \eqref{BT}, respectively.
In this section, we shall write $N_{T,\mu}$ and $B_{T,\mu}$ to keep track of the value of $\mu$ occurring in $N_T$ and $B_T$ explicitly.

The first part of the argument in \cite{hainzl_boundary_2022} is to rephrase the statement about the temperatures $T_l^1$ and $T_c^0$ in terms of the corresponding Birman-Schwinger operators.
It turns out that the relative difference of the temperatures $T_l^1$ and $T_c^0$ vanishes in the strong coupling limit, if the lowest eigenvalues of the Birman-Schwinger operators at temperature $T=1$ and chemical potential $\mu=0$ agree.
This is the content of \cite[Lemma 5.1 and Lemma 5.2]{hainzl_boundary_2022}.
If we replace $T_l^1$ by $T_u^1$ and correspondingly $B_{T,\mu}$ by $N_{T,\mu}$, the arguments in the proofs of these two Lemmas remain valid, except for part (iv) of Lemma 5.1.
Let $N_{T,\mu}$ denote the operator on $L^2(\BR)$ with integral kernel $N_{T,\mu}(p,q)$.
In analogy to \cite[(5.3)]{hainzl_boundary_2022}, we need to show that
\begin{lemma}\label{lea:1d_strongc_1}
\begin{equation}
\lim_{\mu\to 0} \lVert N_{1,\mu}-N_{1,0}\rVert =0
\end{equation}
and
\begin{equation}
\lim_{\mu\to 0} \sup_p \left \vert \int_\BR (N_{1,\mu}(p,q)-N_{1,0}(p,q)) \dd q \right \vert=0.
\end{equation}
\end{lemma}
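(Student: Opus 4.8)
The plan is to reduce both displayed statements to a single uniform estimate on $\int_\BR |N_{1,\mu}(p,q)-N_{1,0}(p,q)|\,\dd q$. Introduce $g(x):=x/\tanh(x/2)$, so that $N_{1,\mu}(p,q)=2\big(g((p+q)^2-\mu)+g((p-q)^2-\mu)\big)^{-1}$. I would first record three properties of $g$: it is even; it satisfies $g(x)\geq\max\{2,|x|\}$ (from $\tanh(y)\leq\min\{1,y\}$ for $y\geq 0$); and it is globally Lipschitz, since $g$ extends to a smooth (indeed analytic) function on $\BR$ with $g(0)=2$, and $g'$ is continuous with finite limits $\pm 1$ at $\pm\infty$, hence bounded, say $|g'|\leq L$.

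From these I would derive two elementary bounds. Since each summand in the denominator of $N_{1,\mu}$ is $\geq 2$, the denominators of $N_{1,\mu}$ and $N_{1,0}$ are each $\geq 4$ and differ by at most $2L\mu$, which gives the pointwise estimate $|N_{1,\mu}(p,q)-N_{1,0}(p,q)|\leq L\mu/4$ for all $p,q\in\BR$ and all $\mu\geq 0$. A short case distinction (according to whether $(p\pm q)^2$ exceeds $4$ or not) shows that for $0\leq\mu\leq 1$ one has $g((p\pm q)^2-\mu)\geq 1+\tfrac14(p\pm q)^2$; adding the two inequalities yields the decay bound $N_{1,\mu}(p,q)\leq \tfrac{4}{4+p^2+q^2}$, uniformly in $\mu\in[0,1]$, in particular $\int_\BR N_{1,\mu}(p,q)\,\dd q\leq 2\pi$ uniformly in $p$ and $\mu$.

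I would then split $\int_\BR|N_{1,\mu}(p,q)-N_{1,0}(p,q)|\,\dd q$ at $|q|=R$. On $\{|q|>R\}$ the integrand is dominated by $N_{1,\mu}(p,q)+N_{1,0}(p,q)\leq \tfrac{C}{1+q^2}$ uniformly in $p$ and $\mu\in[0,1]$, so this part is $<\varepsilon/2$ once $R$ is large enough; on $\{|q|\leq R\}$ the pointwise bound gives a contribution $\leq 2R\cdot L\mu/4$, which is $<\varepsilon/2$ once $\mu$ is small enough. Taking the supremum over $p$ proves the second displayed limit. The first then follows for free: $N_{1,\mu}(p,q)$ depends only on $(p+q)^2$ and $(p-q)^2$, hence is symmetric in $p\leftrightarrow q$, so $N_{1,\mu}-N_{1,0}$ is a self-adjoint integral operator and the Schur test gives $\lVert N_{1,\mu}-N_{1,0}\rVert\leq\sup_p\int_\BR|N_{1,\mu}(p,q)-N_{1,0}(p,q)|\,\dd q\to 0$.

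There is no essential difficulty here; the argument is routine once the elementary bounds on $g$ are in place. The only points requiring a little care are making the decay estimate $N_{1,\mu}(p,q)\leq C(1+p^2+q^2)^{-1}$ hold uniformly in the vanishing parameter $\mu$ (handled by restricting to $\mu\leq 1$), and ensuring every estimate is uniform in $p$ as well as $q$ — which is exactly why one needs the $(1+p^2+q^2)^{-1}$ bound, rather than a bound in $q$ alone, to control the tail $|q|>R$.
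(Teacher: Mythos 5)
Your proof is correct, and it takes a genuinely different route from the paper for both parts of the lemma. For the operator-norm convergence, the paper bounds $\lVert N_{1,\mu}-N_{1,0}\rVert$ by the Hilbert--Schmidt norm, establishes the uniform-in-$\mu$ decay bound $N_{1,\mu}(p,q)\leq c/(1+p^2+q^2)$, and invokes dominated convergence; you instead observe that the kernel is symmetric in $p\leftrightarrow q$ and deduce the operator-norm statement from the second displayed limit via the Schur test, which is a nice economy (one does not need a separate argument). For the integral estimate, the paper writes $N_{1,\mu}-N_{1,0}=\int_0^\mu \partial_\nu N_{1,\nu}\,\dd\nu$, computes $\partial_\nu N_{1,\nu}$ explicitly, and bounds it by a function of $q$ that is integrable uniformly in $p$ and $\nu$, yielding the quantitative rate $O(\mu)$. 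You instead use only the global Lipschitz property of $g(x)=x/\tanh(x/2)$ (which is the same analytic fact, $|g'|\leq L$, packaged differently) together with the decay of $N_{1,\mu}$ itself, and a truncation at $|q|=R$; optimizing $R$ would give the slightly weaker rate $O(\mu^{1/2})$, but since the lemma only asserts convergence this is immaterial. Your approach is somewhat more elementary in that it avoids the explicit computation and pointwise majorization of $\partial_\nu N_{1,\nu}$, at the cost of a less sharp quantitative bound. The key estimates you use — the lower bound $g(x)\geq\max\{2,|x|\}$, the uniform decay $N_{1,\mu}(p,q)\leq 4/(4+p^2+q^2)$ for $\mu\leq 1$, and the Lipschitz bound on $g$ — are all correct and suffice for the argument.
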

Before we prove this Lemma, we shall explain the other nontrivial change.
The second part of the argument in \cite{hainzl_boundary_2022} is to argue that the largest eigenvalues of the Birman-Schwinger operators corresponding to $T_l^1$ and $T_c^0$ at $T=1,\mu=0$ agree, which follows directly from the properties of the function $k(p,q)=\min \{ B_{1,0}(p,0),B_{1,0}(0,q)\}$ listed in \cite[Lemma 5.4]{hainzl_boundary_2022}.
Here, we define $k(p,q)=\min \{ N_{1,0}(p,0),N_{1,0}(0,q)\}$.
Since $N_{T,\mu}(p,0)=B_{T,\mu}(p,0)$, this coincides with the definition of $k$ in \cite{hainzl_boundary_2022} (up to scaling $p$ with $1/2$).
The only property listed in \cite[Lemma 5.4]{hainzl_boundary_2022} requiring a new proof, is the first inequality
\begin{equation}
N_{1,0}(p,q)\leq k(p,q).
\end{equation}
To prove this, observe that $x/\tanh(x)$ is convex and thus for all $x,y\in \BR$
\begin{equation}\label{eq:1d_strongc_1}
\frac{1}{2}\left(\frac{x}{\tanh(x)}+\frac{y}{\tanh(y)}\right) \geq \frac{\frac{1}{2}(x+y)}{\tanh(\frac{1}{2}(x+y))}
\end{equation}
Setting $x=(p+q)^2/2$ and $y=(p-q)^2/2$, and using that $x/\tanh x $ is monotonously increasing for $x\geq 0$, we obtain
\begin{equation}
\frac{1}{2} N_{1,0}(p,q)^{-1} \geq \frac{(p^2+q^2)/2}{\tanh((p^2+q^2)/2)} \geq \frac{1}{2} \max\left \{\frac{p^2}{\tanh(p^2/2)},\frac{q^2}{\tanh(q^2/2)}\right\}=\frac{1}{2} k(p,q)^{-1}
\end{equation}
Therefore, $N_{1,0}(p,q)\leq k(p,q)$.

It only remains to prove Lemma~\ref{lea:1d_strongc_1}.
\begin{proof}[Proof of Lemma~\ref{lea:1d_strongc_1}]
To show that $ \lVert N_{1,\mu}-N_{1,0}\rVert $ vanishes as $\mu\to 0$, we bound the operator norm by the Hilbert-Schmidt norm,
\[
 \lVert N_{1,\mu}-N_{1,0}\rVert ^2 \leq \int_{\BR^2} (N_{1,\mu}(p,q)-N_{1,0}(p,q))^2 \dd p \dd q.
\]
Since $N_{T,\mu}(p,q)\leq 1/2T$ and $N_{T,\mu}(p,q)\leq \frac{2}{|p^2+q^2-\mu|}$, there is a constant $c$ such that for all $\mu<1$ and $p,q \in \BR$
\[
N_{1,\mu}(p,q)\leq \frac{c}{p^2+q^2+1}.
\]
The claim thus follows from dominated convergence.

To show that
\begin{equation}
\sup_p \left \vert \int_\BR (N_{1,\mu}(p,q)-N_{1,0}(p,q)) \dd q \right \vert
\end{equation}
vanishes as $\mu\to 0$, we bound this expression above by
\begin{equation}\label{eq:1d_strongc_2}
\mu \sup_p \sup_{\nu\in[0,\mu]}\int_\BR \left \vert  \frac{\partial}{\partial \nu }N_{1,\nu}(p,q) \right \vert  \dd q.
\end{equation}
With the notation $f(x)=x/\tanh(x/2)$ we have $N_{1,\nu}(p,q)=2 (f((p+q)^2-\nu)+f((p-q)^2-\nu) )^{-1}$.
The derivative with respect to $\nu$ is given by
\begin{equation}
 \frac{\partial}{\partial \nu }N_{1,\nu}(p,q) = 2 \left(f((p+q)^2-\nu)+f((p-q)^2-\nu) \right)^{-2} (f'((p+q)^2-\nu)+f'((p-q)^2-\nu)),
\end{equation}
where
\begin{equation}
f'(x)= \frac{1}{\tanh(x/2)}-\frac{x/2}{\sinh^2(x/2)}.
\end{equation}
Using that $|f'(x)|<1$ for the second factor and \eqref{eq:1d_strongc_1} for the first term, we bound the derivative of $N_{1,\nu}$ by
\begin{equation}
 \left|\frac{\partial}{\partial \nu }N_{1,\nu}(p,q)\right| \leq \frac{1}{f(p^2+q^2-\nu)^2}
\end{equation}
To bound this further, we now restrict to $\nu\leq  1$ and use $f(x)\geq 2 \max\{1,x\}$, and then maximize over $p$ and obtain
\begin{equation}
 \left|\frac{\partial}{\partial \nu }N_{1,\nu}(p,q)\right| \leq \frac{1}{4}\chi_{p^2+q^2<2}+ \frac{1}{4}\frac{\chi_{p^2+q^2\geq 2}}{(p^2+q^2-1)^2}\leq  \frac{1}{4}\chi_{q^2<2}+\frac{c}{(q^2+1)^2}
\end{equation}
for some finite constant $c$.
Since this is integrable, the expression in \eqref{eq:1d_strongc_2} vanishes in the limit $\mu\to 0$.
\end{proof}

\paragraph*{Acknowledgments.}
I thank Marius Lemm and Andreas Deuchert for fruitful discussions and Robert Seiringer for valuable comments on the manuscript.
Funding by the Deutsche Forschungsgemeinschaft (DFG, German Research Foundation) – TRR 352 – Project-ID 470903074 is gratefully acknowledged.

\paragraph*{Conflict of interest.} The author has no conflicts to disclose.

\paragraph*{Data availability statement.}
Data sharing is not applicable to this article as no new data were created or analyzed in this study.

\bibliographystyle{abbrv}

\end{document}